\theoremstyle{acmplain}
\newtheorem{theorem}{Theorem}
\newtheorem{proposition}[theorem]{Proposition}
\newtheorem{lemma}[theorem]{Lemma}
\newtheorem{remark}[theorem]{Remark}
\theoremstyle{acmdefinition}
\newtheorem{definition}[theorem]{Definition}
\setlist{labelindent=\parindent, leftmargin=*}
\newcommand{\OMIT}[1]{}
\newcommand{\gpml}{\text{GPC}\xspace}
\newcommand{\gpmlplus}{\text{GPC+}\xspace}
\newcommand{\collect}{\mathbf{collect}}
\newcommand{\pathlen}{\textsf{len}}
\newcommand{\pathval}{\textsf{path}}
\newcommand{\listval}{\textsf{list}}
\def\endpoints{\textsf{endpoints}}
\newcommand{\concat}{\mathbin{\cdot}}
\newcommand{\set}[2][]{{#1\{}#2{#1\}}}
\newcommand{\setst}{\@ifstar{\autosetst}{\paramsetst}}
\newcommand{\autosetst}[2]{\left\lbrace\,#1~\middle|~#2\,\right\rbrace}
\newcommand{\paramsetst}[3][]{#1\{\,#2\mathbin{#1|}#3\,#1\}}
\newcommand{\bag}{\@ifstar{\autobag}{\parambag}}
\newcommand{\autobag}[1]{\llbrace*#1\rrbrace*}
\newcommand{\parambag}[2][]{\llbrace[#1]#2\rrbrace[#1]}
\newcommand{\bagst}{\@ifstar{\autobagst}{\parambagst}}
\newcommand{\autobagst}[2]{\llbrace*\,#1\,\middle|\,#2\,\rrbrace*}
\newcommand{\parambagst}[3][]{\llbrace[#1]\,#2\mathbin{#1|}#3\,\rrbrace[#1]}
\newcommand{\llbrace}{\@ifstar{\leftllbrace}{\paramllbrace}}
\newcommand{\paramllbrace}[1][]{{#1\{\hspace*{-.25em}#1\{}}
\newcommand{\leftllbrace}{\left\lbrace\kern-3\nulldelimiterspace\middle\lbrace}
\newcommand{\rrbrace}{\@ifstar{\rightrrbrace}{\paramrrbrace}}
\newcommand{\paramrrbrace}[1][]{{#1\}}\hspace*{-.25em}{#1\}}}
\newcommand{\rightrrbrace}{\middle\rbrace\kern-3\nulldelimiterspace\right\rbrace}
\newcommand{\gqlil}[1]{\lstinline`#1`}
\newcommand{\node}{\textsf{node}}
\newcommand{\dom}{\textsf{dom}}
\newcommand{\card}{\textsf{card}}
\newcommand{\var}{\textsf{var}}
\newcommand{\sch}{\textsf{sch}}
\newcommand{\sem}[1]{\left\llbracket#1\right\rrbracket}
\newcommand{\operator}{\mathbin{\Theta}}
\tikzset{>=stealth}
\tikzset{
  detail/.style={%
    draw, rectangle split, rectangle split parts=#1,
    rectangle split draw splits=true
  },
  onslide/.code args={<#1>#2}{\only<#1>{\pgfkeysalso{#2}}},
  invisible/.style={opacity=0},
  visible on/.style={alt=#1{}{invisible}},
  alt/.code args={<#1>#2#3}{%
      \alt<#1>{\pgfkeysalso{#2}}{\pgfkeysalso{#3}} 
  },
}
\tikzset{%
  nodeid/.style={%
    circle, draw, minimum size=4mm, line width=0.5mm, inner sep=1mm, outer sep=0.75mm,
  },
  edgeid/.style={%
    line width=0.5mm, circle, minimum size=5mm, draw=none, fill=white, inner sep=0mm, outer sep=1mm,
  },
  relationship/.style={%
    draw, very thick, -latex
  },
  existing/.style={%
    draw=gray,
  },
  edgelabel/.style={%
    pos=.433,
    above,
  }
}
\tikzset{
    dot diameter/.store in=\dot@diameter,
    dot diameter=3pt,
    dot spacing/.store in=\dot@spacing,
    dot spacing=10pt,
    dots/.style={
        line width=\dot@diameter,
        line cap=round,
        dash pattern=on 0pt off \dot@spacing
    }
}
\definecolor{gray}{rgb}{0.4,0.4,0.4}
\definecolor{darkblue}{rgb}{0.0,0.0,0.6}
\definecolor{darkred}{rgb}{0.45,0,0}
\definecolor{darkgreen}{rgb}{0,0.30,0.20}
\definecolor{darkpurple}{RGB}{120, 0, 180}
\colorlet{keywordcolor}{darkblue}
\newcommand{\kwfont}{\color{keywordcolor}\bfseries}
\colorlet{labelcolor}{darkgreen}
\newcommand{\lblfont}{\color{labelcolor}}
\colorlet{keycolor}{darkred}
\newcommand{\keyfont}{\color{keycolor}}
\colorlet{structcolor}{black}
\newcommand{\structfont}{\color{structcolor}\bfseries}
\lstdefinelanguage{cypher}
{
  columns=fullflexible,
  otherkeywords={<,>,-,[,],\{,\},|,:,?,*},
  keywords=[1]{},
  keywordstyle=,
  keywordstyle=[1]\color{darkpurple},
  keywords=[2]{CONTINUE, ACYCLIC, SIMPLE, MATCH,KEEP,WHERE,WITH,OPTIONAL,RETURN,MERGE,CREATE,SET,DETACH,DELETE,REMOVE,ALL,ANY,GROUP,SAME,IS,NOT,AND,THEN,OR,YIELDS,DISTINCT,PATHS,SHORTEST,CHEAPEST,TRAIL,WALK,COST},
  keywordstyle=[2]\kwfont,
  keywords=[3]{<,>,-,[,],(,)},
  keywordstyle=[3]\structfont,
  keywords=[5]{Woman,L1,L2,L3,L3},
  keywordstyle=[5]\lblfont,
  keywords=[6]{BOUGHT_WITH,MOVIE,FRIEND,PARTNER,BOOK,TRANSFERS,CORP,BANK,SPOUSE,PERSON},
  keywordstyle=[6]\lblfont,
  keywords=[7]{name,id,genre},
  keywordstyle=[7]\keyfont,
  string=[m]{"},
  stringstyle=,
}
\newcommand{\forcedhfill}{\hspace*{0pt plus 1fill}}
\newcounter{query}
\renewcommand{\thequery}{(\arabic{query})}
\newcounter{nextquery}
\newlength{\lstskip}\setlength{\lstskip}{5pt} 
\newcommand{\vm@ccypher@start}[2][]{%
  \leavevmode\unskip\pagebreak[1]\vspace{\lstskip}\par\noindent%
  #1%
  \lst@boxtrue%
  \lstset{language=cypher,boxpos=b,resetmargins=true,mathescape=true,#2}
  \forcedhfill}
\newcommand{\vm@ccypher@end}[1][]{\forcedhfill#1\par\addvspace{\lstskip}}
\newcommand{\src}{\textsf{src}}
\newcommand{\tgt}{\textsf{tgt}}
\newcommand{\dv}{\delta}
\newcommand{\LL}{\mathcal{L}}
\newcommand{\arclit}[1]{%
  \if\relax\detokenize{#1}\relax
  \mathrel{\smash{\xLeftrightarrow{}}}
  \else
  \mathrel{\xLeftrightarrow{\,#1\,}}
  \fi
}
\newcommand{\rightlit}[1]{%
  \if\relax\detokenize{#1}\relax
  \mathrel{\smash{\xrightarrow{}}}
  \else
  \mathrel{\xrightarrow{\,#1\,}}
  \fi
}
\newcommand{\leftlit}[1]{%
  \if\relax\detokenize{#1}\relax
  \mathrel{\smash{\xleftarrow{}}}
  \else
  \mathrel{\xleftarrow{\,#1\,}}
  \fi
}
\newcommand{\shortundirlit}{-\hspace{-3pt}-}
\newcommand{\undirlit}[2][-\hspace{-3pt}-\hspace{-3pt}-]{%
  \if\relax\detokenize{#2}\relax
  \mathrel{#1}
  \else
  \stackrel{#2}{#1}
  \fi
}
\newcommand{\interval}[2]{{#1}..{#2}}
\newcommand{\nodelit}[1]{({#1})}
\newcommand{\quantlit}[1]{\mathpunct{{}^{#1}}}
\newcommand{\condlit}[1]{{}_{\langle #1\rangle}}
\newcommand{\simplelit}{\mathsf{simple}}
\newcommand{\traillit}{\mathsf{trail}}
\newcommand{\shortest}{\mathsf{shortest}}
\newcommand{\descr}{d}
\newcommand{\pat}{\pi}
\newcommand{\restrictor}{\rho}
\newcommand{\rsimple}{\simplelit}
\newcommand{\rtrail}{\traillit}
\newcommand{\cnt}[1]{\sharp(#1)}
\newcommand{\NN}{\ensuremath{\mathcal{N}}}
\newcommand{\EE}{\ensuremath{\mathcal{E}}}
\newcommand{\VV}{\ensuremath{\mathcal{V}}}
\newcommand{\KK}{\ensuremath{\mathcal{K}}}
\newcommand{\XX}{\ensuremath{\mathcal{X}}}
\newcommand{\PP}{\ensuremath{\mathsf{Paths}}}
\newcommand{\TT}{\ensuremath{\mathcal{T}}}
\newcommand{\VVtype}[1]{\VV_{#1}}
\newcommand{\const}{\ensuremath{\textsf{Const}}}
\def\stoptoken{!}
\newif\ifpathstop
\def\vmedge#1,{\if\stoptoken#1\let\vmnode\relax\else,\textcolor{red}{#1},\fi\vmnode}
\def\vmnode#1,{\if\stoptoken#1\let\vmedge\relax\else\textcolor{blue}{#1}\fi\vmedge}%
\definecolor{darkblue}{RGB}{0,56,153}
\definecolor{darkred}{RGB}{153,0,0}
\newcommand{\Paths}{\mathsf{Paths}}
\newcommand{\tnode}{\mathsf{Node}}
\newcommand{\tedge}{\mathsf{Edge}}
\newcommand{\tbool}{\mathsf{Bool}}
\newcommand{\tpath}{\mathsf{Path}}
\newcommand{\tlist}{\mathsf{Group}}
\newcommand{\tmaybe}{\mathsf{Maybe}}
\newcommand{\tnothing}{\mathsf{Nothing}}
\newcommand{\np}{\textsf{NP}\xspace}
\newcommand{\semoneline}[2]{%
#1={}&#2
\displaybreak[0]%
}
\newcommand{\semtwoline}[3][1em]{%
#2={}&\\
\multispan2{\hspace*{#1 plus 1fil}$\displaystyle #3$}%
\\[-8pt]
\displaybreak[0]%
}
\let\colondash\relax
\DeclareMathOperator{\colondash}{\mbox{:--}}
\newcommand{\ans}{\mathit{Ans}}
\newcommand{\expr}{\xi}
\title{\gpml: A Pattern Calculus for Property Graphs}
\author{Nadime Francis}
\email{nadime.francis@univ-eiffel.fr}
\affiliation{%
    \institution{LIGM, U Gustave Eiffel, CNRS}%
    \city{Champs-sur-Marne}
    \country{France}
}
\author{Am{\'e}lie Gheerbrant}
\email{Amelie.Gheerbrant@irif.fr}
\affiliation{%
    \institution{IRIF, U Paris Cit{\'e}, CNRS}%
    \city{Paris}
    \country{France}
}
\author{Paolo Guagliardo}
\email{Paolo.Guagliardo@ed.ac.uk}
\affiliation{\institution{U Edinburgh}\country{UK}}
\author{Leonid Libkin}
\email{l@libk.in}
\affiliation{\institution{U Edinburgh \& RelationalAI \& ENS}\country{UK \& France}}
\author{Victor Marsault}
\email{victor.marsault@univ-eiffel.fr}
\affiliation{%
    \institution{LIGM, U Gustave Eiffel, CNRS}%
    \city{Champs-sur-Marne}
    \country{France}
}
\author{Wim Martens}
\email{wim.martens@uni-bayreuth.de}
\affiliation{\institution{University of Bayreuth}\country{Germany}}
\author{Filip Murlak}
\email{f.murlak@uw.edu.pl}
\affiliation{\institution{University of Warsaw}\country{Poland}}
\author{Liat Peterfreund}
\email{liat.peterfreund@univ-eiffel.fr}
\affiliation{%
    \institution{LIGM, U Gustave Eiffel, CNRS}%
    \city{Champs-sur-Marne}
    \country{France}
}
\author{Alexandra Rogova}
\email{rogova@irif.fr}
\affiliation{%
    \institution{IRIF, U Paris Cit{\'e}, CNRS}%
    \country{}
}
\affiliation{
    \institution{diiP, Inria}    
    \city{Paris}
    \country{France}
}
\author{Domagoj Vrgo\v{c}}
\email{domagojvrgoc@gmail.com}
\affiliation{\institution{PUC Chile \& IMFD}\country{Chile}}
\date{\today}
\begin{document}

\begin{abstract}
The development of practical query languages for graph databases runs well ahead of the underlying theory. The ISO committee in charge of database query languages is currently developing a new standard called {\em Graph Query Language} (GQL) as well as an extension of the SQL Standard for querying property graphs represented by a relational schema, called SQL/PGQ. The main component of both is the pattern matching facility, which is shared by the two standards. In many aspects, it goes well beyond RPQs, CRPQs, and similar queries on which the research community has focused for years.

Our main contribution is to distill the lengthy standard specification into a simple Graph Pattern Calculus (\gpml) that reflects all the key pattern matching features of GQL and SQL/PGQ, and at the same time lends itself to rigorous theoretical investigation.
We describe the syntax and semantics of \gpml, along with the typing rules that ensure its expressions are well-defined, and state some basic properties of the language. With this paper we provide the community a tool to embark on a study of query languages that will soon be widely adopted by industry.
\end{abstract}

\maketitle

\section{Introduction}
\label{sec:intro}

The foundations of graph databases were laid more than 30 years ago in papers that define the now ubiquitous notion of regular path queries (RPQs) \cite{RPQ,CRPQ}. They preceded the development of graph database systems by decades: it was in this millennium that the graph database industry properly emerged, driven by two similar models, namely \emph{RDF data} and {\em property graphs}. The latter one is now promoted by multiple vendors such as Oracle, Neo4j, Amazon, SAP, Redis, TigerGraph etc. A very notable development in the evolution of property graph databases is the decision, taken 3 years ago, to produce a new standard query language called GQL \cite{GQLwiki}. It would subsume the hitherto used languages such as Cypher \cite{Cypher} of Neo4j, PGQL \cite{PGQL} of Oracle, GSQL \cite{tigergraph-sigmod} of TigerGraph, and the G-CORE proposal from an industry/academia group \cite{gcore}.

The new GQL (Graph Query Language) Standard is developed by the same ISO committee that is in charge of developing and maintaining the SQL Standard. The core of any graph query language is its {\em pattern matching} engine, that finds patterns in graphs. GQL's pattern matching facilities are in fact shared across two standards:
\begin{itemize}
    \item SQL/PGQ, a new Part 16 of the SQL standard, that defines querying graphs specified as views over a relational schema;
    \item GQL, a standalone language for querying property graphs.
\end{itemize}

The development of GQL as a query language standard is rather different from SQL. The latter came out of a well-researched relational theory;  relational calculus led to its declarative approach, while relational algebra formed the foundation of RDBMS implementations. But while GQL is "inspired" by the key developments of database research,  they are not represented directly in the language, which itself is designed by an industry consortium. The GQL committee lists\footnote{\url{https://www.gqlstandards.org/existing-languages}} three main academic influences: regular path queries~\cite{RPQ}, Graph XPath \cite{gxpath-jacm}, and regular queries on graphs \cite{RRV17} which are the regular closure of conjunctive RPQs. While these provided important initial orientation, the GQL development is much more in line with industry-level languages such as Cypher, GSQL, and PGQL. 

The theory of graph query languages on the other hand produced a multitude of languages based on RPQs: CRPQs, UCRPQs, 2(UC)RPQs, ECRPQs, just to name a few (see \cite{surveyChile,B13} for many more). However, none of them can play the role of relational calculus with respect to the development of GQL, as they do not capture its key features with respect to both navigation and handling data.

{\bf Our goal} then is to produce that missing piece, a theoretical language that underlies GQL and SQL/PGQ pattern matching and can be studied in the same way as the RPQ family has been studied over decades. Doing so has two significant difficulties:
\begin{enumerate}
    \item GQL pattern matching is described by close to 100 pages of the Standard text which is not human-friendly; it is intended for developers implementing the language;
    \item Even that text is not available to the research community: the relevant ISO standards will only be published in 2023 and even then will be behind a paywall. 
\end{enumerate}

The only publicly available source is \cite{sigmod22}. It outlines the main features of GQL and SQL/PGQ pattern matching by means of several examples and thus hardly serves the purpose of introducing a pattern matching calculus to form the basis of further study of graph querying.
Such a calculus should judiciously choose the key features leaving others for extensions. Think again about the SQL/relational calculus analogy. The latter does not have bag semantics, nulls, aggregates, full typing, and many other SQL features. We adopt a similar approach here. Our goal is to introduce a calculus that captures the {\em essence} of GQL pattern matching, without every single feature present. Similarly to relational calculus, we settle for the core set-semantics fragment without nulls or aggregates, each of which can be added as extensions. 

\paragraph{What is new?}
We now outline the main differences between the currently existing theoretical languages and real-life pattern matching in property graphs that we need to capture. 

\begin{itemize} 
\item Much of the theoretical literature relies on an overly simplified model of graph databases as edge-labeled graphs; this is common in the study of RPQs and extensions \cite{B13}. Typical patterns in languages like Cypher combine graph navigation with querying data held in nodes and edges. With data values, the theoretical model of choice is data graphs \cite{gxpath-jacm}: in those, each node carries a single data value, similarly to data trees studied extensively in connection with tree-structured data \cite{BDMSS06}. Property graphs are yet more complex, as each edge or node can carry an {\em arbitrary collection of key-value pairs}. 

\item GQL patterns bind variables -- in different ways -- and use them to select patterns. As an example, consider a pattern (in our notation, to be introduced in Section \ref{sec:calculus}) $\nodelit{x:a} \rightlit{e:b} \nodelit{y:a}$ that looks for $b$-labeled edges between two $a$-labeled elements. While producing a match, it {\em binds} variables $x$ and $y$ to the start and end-nodes of the edge, and $e$ to the edge itself. 
If, on the other hand, we look at $\nodelit{x:a} \rightlit{e:b}\!\quantlit{1..\infty} \nodelit{y:a}$, then we match paths of length 1 or more (indicated by $1..\infty$) from $x$ to $y$. In this case $e$ gets bound not to an edge, but rather to a {\em list of edges}. This has immediate implications on conditions in which such a binding can be used.

\item Property graphs are multigraphs (there can be multiple edges between two endpoint vertices), pseudographs (there can be an edge looping from a vertex to itself), and mixed, or partially directed graphs (as an edge can be directed or undirected). This means edges can be traversed in different directions, or traversals can be indicated as having no direction at all. 

\item To ensure the number of paths returned is finite, real-life languages 
put additional restrictions on paths, such as insisting that they be trails (no repeated edges, as in Cypher), simple (no repeated nodes), or shortest (as in G-Core \cite{gcore}). 
Typically in research literature one considers each one of those semantics separately, but GQL permits mixing them. 

\item Similarly to Cypher, in GQL one can match paths and output them. In theoretical languages this feature is rather an exception \cite{gcore,BarceloLLW-tods12}.

\item Finally, one can apply conditions to filter matched paths. For example, after matching $\nodelit{x:a} \rightlit{e:b}\quantlit{\!1..\infty} \nodelit{y:a}$, one can apply a condition $x.k=y.k$ stating that property $k$ of both $x$ and $y$ is the same. Notice that we cannot talk similarly about properties of $e$, as they are bound to a list.
\end{itemize}

Other features of patterns are those in 
regular languages: concatenation, disjunction, repetition; they can be applied on top of already existing patterns, similarly to \cite{RRV17}.

Our main contribution is the {\em Graph Pattern-matching Calculus \gpml} that captures all the key features of GQL and SQL/PGQ pattern matching. Its syntax is described in Section \ref{sec:calculus} (after the definition of property graph concepts in Section \ref{sec:prelim}). To ensure the well-definedness of its expressions, the calculus comes with a {\em type system}, given in Section \ref{sec:types}. We give a formal semantics of \gpml\ in Section \ref{sec:semantics}, and provide a number of basic results on the complexity of the language, and its relationship with classical theoretical formalisms in Section~\ref{sec:complexity}. In Section \ref{sec:extensions} we outline some possible extensions and describe two concrete examples where theoretical studies of the language had a direct impact on the Standard as it was being written.

\section{Data Model}
\label{sec:prelim}

We take the standard (as currently adopted by the GQL Standard  committee \cite{LDBC:TR:TR-2021-01}) definition of property graphs. We assume disjoint countable sets $\NN, \EE_\mathsf{d}, \EE_\mathsf{u}$ of node, directed, and undirected edge ids, $\LL$ of labels, $\KK$ of keys, and $\const$ of constants. A property graph 
is a tuple 
$G = \langle N, E_\mathsf{d}, E_\mathsf{u}, \lambda, \endpoints, \src, \tgt, \dv\rangle$
where
\begin{itemize}
\item $N \subset \NN$ is a finite set of node ids used in $G$;
\item $E_\mathsf{d}\subset \EE_\mathsf{d}$ is a finite set of directed edge ids used in $G$;
\item $E_\mathsf{u}\subset \EE_\mathsf{u}$ is a finite set of undirected edge ids used in $G$;
\item $\lambda:  N  \cup  E_\mathsf{d} \cup E_\mathsf{u} \to 2^{\LL}$  
is a labeling function that associates with every id a (possibly empty) finite set
 of labels from $\LL$;
\item $\src, \tgt: E_\mathsf{d} \to N$ define source and target of a directed edge;
\item $\endpoints: E_\mathsf{u} \to 2^N$ so that $|\endpoints(e)|$ is 1 or 2 define endpoints of an undirected edge;
\item $\dv: (N  \cup  E_\mathsf{d} \cup E_\mathsf{u}) \times \KK \to \const$ is a partial function that associates a constant with an id and a key from $\KK$. 
\end{itemize}

We use \emph{node} and \emph{edge} to refer to node ids and edge ids, respectively, and call a node $u$ an \emph{$\ell$-node} iff $\ell \in \lambda(u)$; similarly for edges.  

A \emph{path} is an alternating sequence of nodes and edges that starts and ends with a node, that is, it is a sequence of the form 
\[u_0 e_1 u_1 e_2 \cdots e_{n} u_n\;, \]
where $u_0,\ldots,u_n$ are nodes and $e_1, \ldots,e_n$ are (directed or undirected) edges. Note that we allow $n=0$, in which case the path consists of a single vertex and no edges. For a path $p$ we denote $u_0$ as $\src(p)$ and $u_n$ as $\tgt(p)$; we also refer to $u_0$ and $u_n$ as the path's \emph{endpoints}.
The \emph{length} of a path $p$, denoted $\pathlen(p)$, is $n$, i.e., the number of occurrences of edge ids in $p$. We also use the term \emph{edgeless path} to refer to a path of length zero. We spell paths explicitly as $\pathval(u_0,e_1,u_1,\cdots, e_n,u_n)$. 
We denote the set of all paths by~$\PP$.

A \emph{path in $G$} is a path such that each edge in it connects the nodes before and after it in the sequence.\footnote{As is usual in the graph database literature \cite{openCypher,MendelzonW95,Woo,B13}, we use the term path to denote what is called \emph{walk} in the graph theory literature \cite{bollobas2013modern}.} More formally, it is a path $\pathval(u_0, e_1, u_1, e_2, \ldots, e_{n}, u_n)$
such that at least one of the following holds for each $i \in [n]$:
\begin{enumerate}[(a)]
    \item $\src(e_i) = u_{i-1}$ and $\tgt(e_i) = u_i$ in which case we speak of $e_i$ as a {\em forward} edge in the path;
    \item $\src(e_i) = u_i$ and $\tgt(e_i) = u_{i-1}$ in which case we speak of $e_i$ as a {\em backward} edge in the path;
    \item $\endpoints(e_i) = \{u_{i-1},u_i\}$ in which case we speak of $e_i$ as an {\em undirected} edge in the path.
\end{enumerate}
Here, both (a) and (b) can be true at the same time in the case of a directed self-loop. 
By $\Paths(G)$ we denote the set of paths in $G$. Notice that $\Paths(G)$ can be infinite. 

Two paths $p = \pathval(u_0,e_0,\ldots,u_k)$ and $p' =
\pathval(u'_0,e'_0,\ldots,u'_j)$ 
\emph{concatenate}
if $u_k = u'_0$, in
which case their \emph{concatenation} $p \concat p'$ is defined as
$\pathval(u_0,e_0,\ldots,u_k,e'_0,\ldots,u'_j)$. %
Note that if one of the paths consists of a single node, then it is a unit of
concatenation and does not change the result. That is, $p \concat \pathval(u)$
is defined iff $u=u_k$, in which case it equals $p$; likewise for $\pathval(u)
\concat p$ and $u=u_0$.

\section{Pattern calculus}
\label{sec:calculus}

\begin{figure}[t!]
  \newlength{\lindent}
  \setlength{\lindent}{6em}
  \fbox{%
    \begin{subfigure}[t]{\linewidth-2\fboxsep-2\fboxrule}
      \begin{flushleft}
        \makebox[\lindent][l]{\textbf{BASICS}}
        For $x,y \in \XX$, $\ell \in \LL$, $a,b \in \KK$, $c\in \const$:
      \end{flushleft}
      \centering
      \begin{minipage}{.8\linewidth}
        \begin{flalign*}
          &(\text{descriptor}) &
          \descr ~\Coloneqq{}~& x ~\mid~ {:\ell} ~\mid~ x:\ell
          \\
          &(\text{direction}) &
          {\arclit{\hphantom{d}}} ~\Coloneqq{}~& {\rightlit{}} %
          ~\mid~ {\leftlit{}} ~\mid~ {\shortundirlit}
          \\
          &(\text{condition}) &
          \theta ~\Coloneqq{}~& x.a=c %
          ~\mid~ x.a=y.b \\
          && ~\mid{}~& \theta\land\theta %
          ~\mid~ \theta\lor\theta %
          ~\mid~ \lnot\theta %
          \\
          & (\text{restrictor}) &
          \restrictor ~\Coloneqq{}~& \rsimple ~\mid~ \rtrail ~\mid~ \shortest \\
          && ~\mid{}~& \shortest\,\rsimple ~\mid~ \shortest\,\rtrail
        \end{flalign*}
      \end{minipage}
    \end{subfigure}%
  }%
  \\[1ex]
  \fbox{%
    \begin{subfigure}[t]{\linewidth-2\fboxsep-2\fboxrule}
      \begin{flushleft}
        \makebox[\lindent][l]{\textbf{PATTERNS}} For $0 \leq n \leq m \leq \infty$:
      \end{flushleft}
      \centering
      \begin{minipage}{.65\linewidth}
        \begin{flalign*}
          \pat ~\Coloneqq{}~&
          \nodelit{\,} ~\mid~ \nodelit{\descr} &(\text{node pattern})&
          \\[-3pt]
          ~\mid{}~& {\arclit{\hphantom{d}}} ~\mid~ {\arclit{d}} &(\text{edge pattern})&
          \\
          ~\mid{}~& \pat + \pat & (\text{union})&
          \\
          ~\mid{}~& \pat\,\pat & (\text{concatenation})&
          \\
          ~\mid{}~& \pat\condlit{\theta} & (\text{conditioning})&
          \\
          ~\mid{}~& \pat\quantlit{{\interval{n}{m}}} & (\text{repetition})&
        \end{flalign*}
      \end{minipage}
    \end{subfigure}%
  }
  \\[1ex]
  \fbox{%
    \begin{subfigure}[t]{\linewidth-2\fboxsep-2\fboxrule}
      \begin{flushleft}
        \makebox[\lindent][l]{\textbf{QUERIES}} For $x \in \XX$:
      \end{flushleft}
      \centering
      \begin{minipage}{.65\linewidth}
        \begin{flalign*}
          Q ~\Coloneqq{}~& \rho\,\pi ~\mid~ x=\rho\,\pi & (\text{pattern})&
          \\
          ~\mid{}~& Q,Q & (\text{join})&
        \end{flalign*}
      \end{minipage}
    \end{subfigure}%
  }
  \caption{\gpml expressions}
  \label{f:patterns-fig}
\end{figure}

We assume a countably infinite set $\XX$ of \emph{variables}. The basic building
blocks of \gpml are {\em node patterns} and {\em edge} (or {\em arrow}) {\em
  patterns}. 
  Node patterns are of the form $\nodelit{x:\ell}$. Here $x$ is a variable, and $:\ell$ specifies the node label. 
  The brackets ``('' and ``)'' are mandatory, and signify that we are talking about a node.  Both the variable $x$, and the label specification $:\ell$ are optional, and can be omitted. This way, the simplest node pattern is $\nodelit{}$, matching any node in the graph.
The presence of a
variable means that it gets bound; the presence of a label $\ell$ means that
only $\ell$-nodes are matched. Edge patterns are of the form
$\arclit{x:\ell}$, where again $x$ and $\ell$ are a  variable and
an edge label, respectively, and $\arclit{}$ is one of the allowed
directions: $\rightlit{}$ (forward), $\leftlit{}$ (backward), ${\shortundirlit}$
(undirected). Both $x$ and $:\ell$ can be omitted. In the case when they are present, the variable $x$ gets bound to the matching edge, and $:\ell$ constrains the allowed edge labels.

The full grammar of \gpml is given in Fig.~\ref{f:patterns-fig}. Here:
\begin{itemize}
\item[$\descr$] specifies node and edge {\em descriptors}; these may include a
  variable to which that graph element is bound, and its label.
\item[$\arclit{\hphantom{d}}$] specifies possible edge {\em directions}: forward, backward,
  and un\-directed.
\item[$\theta$] defines {\em conditions}: atomic ones compare property values held in
  nodes or edges to one another or to constants, and conditions are closed under
  Boolean connectives.
\item[$\restrictor$] specifies {\em restrictors} on paths to ensure a finite
  result set; paths can be restricted to be \emph{simple} (no repeated nodes),
  \emph{trail} (no repeated edges), or \emph{shortest}, which can be optionally
  combined with simple or trail.
\item[$\pi$] defines {\em patterns}: the atomic ones are node and edge patterns,
  which have an optional descriptor and, for the latter, a mandatory direction;
  patterns are then built from these using {\em concatenation} (denoted by
  juxtaposition), {\em union} ($+$), {\em conditioning} (akin to selection in
  relational algebra), and {\em repetition} of the form $\interval{n}{m}$,
  meaning that the pattern is repeated between $n$ and $m$ times. Note that the
  $\interval{0}{\infty}$ repetition is precisely the Kleene star.
\item[$Q$] defines a {\em query}: a non-empty list of optionally named
  ($x=\restrictor\pat$) path patterns, each qualified by a restrictor.
\end{itemize}
When we write patterns, we disambiguate with square brackets, and the lower operator takes precedence. For instance, $\pi\pi'\condlit{\theta}+\pi'' = [\pi[\pi'\condlit{\theta}]]+\pi''$.
By an \emph{expression} of \gpml we shall mean a pattern or a query. 

\paragraph{Examples}

The formal semantics is presented in Section \ref{sec:semantics}. Next we
illustrate how \gpml operates with several examples. Each path pattern is matched to a path;
such a path could be a single node, an edge (with endpoints included), or a more
complex path.  For example, consider the pattern
\begin{equation*}
  \nodelit{x_1:A} %
  \rightlit{y_1} \nodelit{x_2:B} %
  \leftlit{y_2}  \nodelit{x_3:C} %
  \rightlit{y_3} \nodelit{x_1} 
\end{equation*}
matches a path from an $A$-node to itself via $B$- and $C$-nodes with the first
and third edges going forward and the second edge going backward.
Notice that this pattern introduces an implicit join over the endpoints of the path by repeating the variable $x_1$.

The pattern $\nodelit{x:A}\rightlit{}\nodelit{z:B} \, \big[\!\leftlit{}\nodelit{u:C}  + \nodelit{\,} \big]$ is an {\em optional} pattern\footnote{Note that we use square brackets for grouping, since $\nodelit{}$ defines a node pattern.}: a feature present in many languages such as SPARQL and Cypher. It matches an edge from an $A$-node to a $B$-node, binding $x$ and $z$ to its endpoints, and if the $B$-node has an incoming edge from a $C$-node, binds $u$ to its source. This pattern can be seen as the disjunction $\pi_1+\pi_2$ where $\pi_1=\nodelit{x:A}\rightlit{}\nodelit{z:B} \leftlit{}\nodelit{u:C}$  and $\pi_2=\nodelit{x:A}\rightlit{}\nodelit{z:B} \ \nodelit{\,}$. In $\pi_2$, the concatenated $\nodelit{\,}$ must match the same node as $z$ and thus it has no effect on the pattern; this accounts for the case where the node bound to $z$ has no incoming edge from a $C$-node. 

The pattern $\nodelit{x:A}\rightlit{y}\quantlit{\interval{1}{\infty}} \nodelit{z:B}$
looks for paths of arbitrary positive length from an $A$-node to a $B$-node.
It uses variable $y$ to bind edges encountered on this path. Unlike the bindings for $x$ and $z$, which are unique nodes, there may well be multiple edges encountered on the path between them, and thus $y$ needs to be bound to a complex object encoding all such edges on a path. Intuitively, in this case $y$ binds to the {\em list} of edges on a matching path. However, in general the binding for such variables, which we call {\em group variables}, is more complex. Indeed, consider a pattern $\pat\quantlit{\interval{n}{m}}$ and a particular match of this pattern in which $\pat$ is repeated $k$ times, $n\leq k \leq m$. These $k$ repetitions of $\pat$ are matched by paths $p_1,\ldots,p_k$, and thus for every variable used in $\pat$ we need to record not only which elements of $p_1\cdots p_k$ it binds to, but also in which paths $p_i$ these elements occur. Thus, in general, group variables will be bound to lists of (path, graph element)
pairs. 

The pattern $\big[\nodelit{x:A}\rightlit{y}\quantlit{\interval{1}{\infty}} \nodelit{z:B}\big]\condlit{x.a=z.a}$ is 
an example of a conditioned pattern; here the condition
ensures that the value of property $a$ is the same at the endpoints of the path. Note that conditions cannot compare nodes or edges, only their  properties. 

A pattern cannot be used by itself as a query; for example if we write $u=\big[\nodelit{x:A}\rightlit{y}\quantlit{\interval{1}{\infty}} \nodelit{z:B}\big]$ then the variable $u$ can be bound to infinitely many paths. Indeed, if there is a loop on some path from $x$ to $z$, it can be traversed arbitrarily many times, while still satisfying the condition of the pattern. To deal with this, every pattern in a query is compulsorily preceded by a restrictor, e.g., $u = \rtrail\ \big[\nodelit{x:A}\rightlit{y}\quantlit{\interval{1}{\infty}} \nodelit{z:B}\big]$. Then only trails, of which there are finitely many, that satisfy the conditions of $\pi$ will be returned as values of variable $u$.

\paragraph{The necessity of type rules}

The calculus defined in Fig.~\ref{f:patterns-fig} is very permissive and 
allows expressions that do not type-check. For example,
$\nodelit{x}\rightlit{x}\nodelit{\,}$ is syntactically permitted even though it
equates a node variable with an edge variable. As another example, adding the
condition $x.a=y.a$ to the pattern
$\nodelit{x:A}\rightlit{y}\quantlit{\interval{1}{\infty}} \nodelit{z:B}$ seen
above would result in comparing a singleton with a list of pairs.
The type system introduced next eliminates such mismatches.

\section{Type System}
\label{sec:types}

\begin{figure*}\centering

\newcommand{\custominference}[2]{$\displaystyle\inference{#1}{#2}$}

\custominference{}{\nodelit{x} \vdash x:\tnode}
\hfil
\custominference{}{\nodelit{x:\ell} \vdash x:\tnode}
\hfil
\custominference {}{\arclit{x}\  \vdash x:\tedge} 
\hfil
\custominference {}{\arclit{x :\ell}\  \vdash x:\tedge} 
\hfil
\custominference{ x \notin \var(\pat)}{x=\restrictor\ \pat \vdash x:\tpath}

\bigskip

\custominference {\pat\vdash z:\tau}{\pat\quantlit{n..m} \vdash z:\tlist(\tau)} 
\hfil
\custominference {\pat\vdash z:\tau}{\restrictor \pat \vdash z:\tau}
\hfil
\custominference{\pat \vdash z:\tau \quad z\neq x }{ x= \restrictor \pat \vdash  z:\tau}

\bigskip

\custominference {\pat\vdash x:\tau \quad \tau\in\set{\tnode,\tedge}}{\pat \vdash x.a=c:\tbool} \hfil
\custominference {\pat\vdash x:\tau\quad \pat\vdash y:\tau'\quad\tau,\tau'\in\set{\tnode,\tedge}}{\pat \vdash x.a=y.b:\tbool} 

\bigskip

\custominference {\pat\vdash \theta:\tbool \quad \pat\vdash\theta':\tbool}{\pat \vdash \theta\wedge\theta':\tbool} 
\hfil
\custominference {\pat\vdash \theta:\tbool \quad \pat\vdash\theta':\tbool}{\pat \vdash \theta\vee\theta':\tbool} 
\hfil
\custominference {\pat\vdash \theta:\tbool}{\pat \vdash \neg\theta:\tbool} 
\hfil
\custominference {\pat\vdash \theta:\tbool \quad \pat\vdash z:\tau}{\pat\condlit{\theta} \vdash z:\tau} 

\bigskip

\custominference {\pat_1\vdash z:\tau \quad \pat_2\vdash z:\tau}{\pat_1 + \pat_2 \vdash z:\tau}
\hfil
\custominference {\pat_1\vdash z:\tau \quad \pat_2\vdash z:\tmaybe(\tau)}{\pat_1 + \pat_2 \vdash z:\tmaybe(\tau)}
\hfil
\custominference {\pat_1\vdash z:\tmaybe(\tau) \quad \pat_2\vdash z:\tau}{\pat_1 + \pat_2 \vdash z:\tmaybe(\tau)}

\bigskip

\custominference {\pat_1\vdash z:\tau \quad  z \notin \var(\pat_2) }{\pat_1 + \pat_2 \vdash 
z: \tau?} 
\hfil
\custominference {\pat_2\vdash z:\tau \quad z \notin \var(\pat_1) }{\pat_1 + \pat_2 \vdash z: \tau?}

\bigskip

\custominference {\pat_1\vdash z:\tau \quad \pat_2\vdash z:\tau \quad \tau\in\set{\tnode,\tedge}}{\pat_1 \, \pat_2 \vdash z:\tau} 
\hfil
\custominference {\pat_1\vdash z:\tau \quad z \not\in\var(\pat_2)}{\pat_1 \, \pat_2 \vdash z:\tau}
\hfil
\custominference {\pat_2\vdash z:\tau \quad z \not\in\var(\pat_1)}{\pat_1 \, \pat_2 \vdash z:\tau} 

\bigskip

\custominference {Q_1\vdash z:\tau \quad Q_2\vdash z:\tau \quad \tau\in\set{\tnode,\tedge}}{Q_1 , Q_2 \vdash z:\tau}
\hfil
\custominference{Q_1\vdash z:\tau \quad z\not\in\var(Q_2) }{Q_1, Q_2 \vdash z:\tau}
\hfil
\custominference{Q_2\vdash z:\tau \quad z\not\in \var(Q_1) }{Q_1, Q_2 \vdash z:\tau}

\caption{Typing rules for the \gpml type system.}
\label{fig:type-system}
\end{figure*}

The goal of the type system is to ensure that \gpml expressions do not exhibit the pathological behavior explained at the end of the previous section. 

The set $\TT$ of types used to type variables is defined by the following grammar
\[ \tau\; \mbox{::=}\; \tnode \mid \tedge \mid \tpath \mid  \tmaybe(\tau) \mid \tlist(\tau).\]
The three atomic types are used for variables returning nodes, edges, and paths, respectively. The type constructor $\tmaybe$ is used for variables occurring on one side of a disjunction only, while $\tlist$ is used for variables occurring under repetition, whose bindings are grouped together.  As variables in \gpml are never bound to property values, we do not need the usual types like integers or strings. However, to eliminate references to unbound variables, we do need to type conditions (such as for $\langle x.a=y.a \rangle$ in the example at the end of Section~\ref{sec:calculus}); we use an additional type $\tbool$ for that.

Typing statements are of the form $\expr \vdash x:\tau$ stating that in expression $\expr$ (a pattern or a query), we can derive that variable $x$ has type $\tau$, and $\expr\vdash\theta:\tbool$, stating that a condition is correctly typed as a Boolean value under the typing of other variables.
 
The typing rules are presented in Figure~\ref{fig:type-system}. Here, $\var(\expr)$ stands for the set of variables used in expression $\expr$. 
For a type $\tau$ we let $\tau? = \tau$ if $\tau = \tmaybe(\tau')$ for some $\tau'$ and $\tau? = \tmaybe(\tau)$ otherwise.

The first five rules state that variables in node/edge patterns, and variables naming paths, are typed accordingly. The next four rules say that variables of group type appear in repetition patterns, and that restrictors and path naming do not affect typing. 

The next two lines deal with typing conditions: property values of singletons can be compared for equality; conditions are closed under Boolean connectives; and correctly typed conditions do not affect the typing of variables in a pattern.

The following two lines deal with the optional type $\tmaybe(\tau)$. It is assigned to a variable $z$ in a disjunction $\pat_1 + \pat_2$ if in one of the patterns $z$ is of type $\tau$ and in the other $z$ is either not present or of type $\tmaybe(\tau)$. 

Derivation rules for concatenation $\pat_1\pat_2$ and join $Q_1,Q_2$ are similar: a variable is allowed to appear in both expressions only if it is typed as a node or an edge in both, or it inherits its type from one when it does not appear in the other. 

\begin{definition}
An expression is \emph{well-typed}  if for every variable used in it, its type can be derived according to the typing rules.
\end{definition}

A well-typed expression  assigns a unique type to every variable appearing in it, and only to such variables. 

\begin{proposition}\label{l:type-sys is functional}
For every well-typed expression $\expr$, variable $x$, and types $\tau, \tau'$, 
\begin{itemize}
    \item $\expr\vdash x: \tau$ implies $x \in \var(\expr)$;
    \item $\expr \vdash x: \tau$ and $\expr\vdash x: \tau'$ imply that $\tau = \tau'$.
\end{itemize}
\end{proposition}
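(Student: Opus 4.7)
The plan is to establish the two claims in order, proving claim (1) first by induction on derivations and then leveraging it to support the uniqueness argument for claim (2) by structural induction on the expression.

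For claim (1) I would proceed by induction on the height of the typing derivation. In each of the five axiom rules -- the two node patterns, the two edge patterns, and the path-naming rule $x = \restrictor\,\pat$ with $x \notin \var(\pat)$ -- the variable $x$ in the conclusion appears syntactically in the pattern, so $x \in \var(\expr)$. Every other rule lifts a judgment $\expr' \vdash z : \tau'$ from a proper subexpression $\expr'$ of $\expr$; the inductive hypothesis gives $z \in \var(\expr')$, and $\var(\expr') \subseteq \var(\expr)$ closes the case. The only rule that deserves a second look is $x = \restrictor\, \pat \vdash z:\tau$ when $z \neq x$, which is handled by applying the inductive hypothesis to the premise $\pat \vdash z:\tau$.

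For claim (2) I would induct on the structure of $\expr$, asking in each case which typing rules can possibly conclude a judgment of the form $\expr \vdash z : \tau$. The axiom rules force a fixed type ($\tnode$, $\tedge$, or $\tpath$). The unary constructors -- repetition, restrictor, conditioning, and the carryover part of path naming -- each admit a single applicable rule, and the output type is a deterministic function of the type produced for $z$ by the unique subexpression, so uniqueness propagates directly from the inductive hypothesis.

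The main obstacle lies in the three multi-rule constructors: union $\pat_1 + \pat_2$, concatenation $\pat_1\,\pat_2$, and join $Q_1,Q_2$. For union, claim (1) implies that whenever $z \notin \var(\pat_i)$ no derivation $\pat_i \vdash z : \tau$ exists, so the two rules carrying such a side condition are mutually exclusive with the other three and with each other (they each force $z$ to occur in exactly one of the two subpatterns, and in that unique subpattern the type, and hence $\tau?$, is determined by the inductive hypothesis). Within the first three union rules, the inductive hypothesis gives unique types $\tau_1, \tau_2$ for $z$ in $\pat_1, \pat_2$, and a short case analysis shows that the equalities required by the three rules -- $\tau_1 = \tau_2$, $\tau_2 = \tmaybe(\tau_1)$, and $\tau_1 = \tmaybe(\tau_2)$ -- are pairwise incompatible, since any two of them would force a type to coincide with $\tmaybe$ of itself, which is impossible as $\tmaybe(\cdot)$ strictly increases type size. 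An analogous but simpler argument handles concatenation and join, where the three rules partition according to whether $z$ belongs to both subexpressions (forcing a shared atomic type $\tnode$ or $\tedge$), only the first, or only the second. In all cases the inductive hypothesis together with claim (1) pins down both the applicable rule and the resulting type uniquely.
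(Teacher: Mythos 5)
Your proposal is correct and follows essentially the same route as the paper's (much terser) proof: item (1) because every variable in a rule's conclusion comes from a premise or appears explicitly in the expression, and item (2) because the rules applicable to a given expression have mutually exclusive premises, so uniqueness propagates by induction. Your elaboration of the union/concatenation/join cases—using item (1) to rule out the $z \notin \var(\cdot)$ rules and the fact that no type equals $\tmaybe$ of itself to separate the remaining union rules—is exactly the detail the paper compresses into the phrase ``mutually exclusive premises.''
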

\begin{proof}
The first item holds because a variable appears in the conclusion of an inference rule only if it appears in one of its prem\-ises, or explicitly in the expression.
The second item holds because all inference rules have mutually exclusive premises.
\end{proof}

\begin{definition}
    Let~$\operator$ denote a binary operator from \gpml (Fig.~\ref{f:patterns-fig}).
    We say that~$\operator$ is \emph{associative  (resp.\@commutative) with respect to the type system}
    if the condition \eqref{e:associative} (resp.\@  \eqref{e:commutative}) below holds for all expressions~$\expr_1, \expr_2, \expr_3$, types~$\tau$, and  variables~$x$:
    \begin{gather}
    \label{e:associative}
        (\expr_1\operator\expr_2) \operator \expr_3 \vdash x:\tau 
            \iff  
        \expr_1\operator(\expr_2 \operator \expr_3)\vdash x:\tau\,,
    \\
        \expr_1\operator\expr_2\vdash x:\tau 
            \iff \expr_2\operator\expr_1\vdash x:\tau\,.
    \label{e:commutative}
    \end{gather}
\end{definition}

\begin{proposition} \mbox{}
    \begin{itemize}
        \item Union, concatenation and join are associative and commutative with respect to the type system.
        \item There is no expression $\expr$, variable~$x$, and type~$\tau$ such that $\expr\vdash x:\tmaybe(\tmaybe(\tau))$.
    \end{itemize}
    
\end{proposition}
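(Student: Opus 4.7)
The plan is to attack both items by structural induction on typing derivations, exploiting the compositional nature of the rules in Figure~\ref{fig:type-system}: for each operator, the type assigned to a variable $z$ in a compound expression is determined by the types assigned to $z$ in the immediate subexpressions, together with the information of whether $z$ occurs in each subexpression at all.

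Commutativity is by direct inspection. For each of union, concatenation, and join, the typing rules come in symmetric pairs (or are themselves symmetric), so swapping $\expr_1$ and $\expr_2$ simply exchanges which rule of the pair is applied. Hence $\expr_1 \operator \expr_2 \vdash x:\tau$ iff $\expr_2 \operator \expr_1 \vdash x:\tau$.

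For associativity I treat the three operators separately. For concatenation and join the rules force $z$ to have a matching type $\tnode$ or $\tedge$ in every subexpression in which it occurs, and to inherit that type in the parent expression; a short case analysis on which subset of $\{\expr_1, \expr_2, \expr_3\}$ contains $z$ settles these. Union is the main obstacle, since one must track how $\tmaybe$ is introduced. Enumerating the cases according to the subset of $\{\expr_1,\expr_2,\expr_3\}$ containing $z$, I verify that if $z$ has matching type $\tau$ in all three subexpressions then both parenthesizations derive $z:\tau$ (applying the first union rule twice), and in every other case both derive $z:\tmaybe(\tau)$. The most delicate subcase is when $z$ appears only in $\expr_1$: in $(\expr_1 + \expr_2) + \expr_3$ the inner union assigns $\tmaybe(\tau)$ via the $\tau?$ rule, and the outer union applies another $\tau?$ rule which, since $\tau?$ is defined not to re-wrap an already-Maybe type, again yields $\tmaybe(\tau)$, matching the right-associated derivation.

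For the second bullet, I use structural induction on derivations to show that for every derivable $\expr \vdash x:\tau$ the type $\tau$ is not of the form $\tmaybe(\tmaybe(\tau'))$. The base cases produce atomic types, and the remaining rules either preserve the outermost type constructor or change it to $\tlist(\cdot)$ (quantification), leaving the $\tmaybe$-free invariant intact. The only rules whose conclusion carries a $\tmaybe$ are the four union rules on the third and fourth lines of Figure~\ref{fig:type-system}. In the two rules combining a $\tau$ premise with a $\tmaybe(\tau)$ premise, the inductive hypothesis applied to the latter premise tells us that $\tau$ is itself not of the form $\tmaybe(\cdot)$, so the conclusion $\tmaybe(\tau)$ is single Maybe. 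In the two single-side rules producing $\tau?$, the definition of $\tau?$ wraps to $\tmaybe(\tau)$ only when $\tau$ is not already Maybe, so the conclusion is again single Maybe.
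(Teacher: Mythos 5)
Your proof is correct and follows essentially the same route as the paper's (omitted) sketch: the first item by direct inspection/case analysis of the symmetric rule pairs, and the second by induction on the typing derivation, observing that the $\tau?$ operator and the $\tau$/$\tmaybe(\tau)$ union rules never produce a doubly wrapped $\tmaybe$. Your treatment is in fact somewhat more detailed than the paper's one-line argument, in particular in isolating the single-occurrence union subcase where idempotence of $\tau?$ is what makes associativity go through.
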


\label{sec:schema}

A \emph{schema}~$\sigma$ is a partial function from variables $\XX$ to types $\TT$, with a finite domain. With each well-typed expression $\expr$ we can naturally associate a schema $\sch(\expr)$, induced by the types derived from $\expr$. It is defined formally below; it is well-defined by Proposition~\ref{l:type-sys is functional}.

\begin{definition}
Given a well-typed expression $\expr$, the \emph{schema of}~$\expr$, written~$\sch(\expr)$, is the schema that maps each variable~$x\in\var(\expr)$ to the unique type~$\tau$ such that~$\expr\vdash x:\tau$.
A variable~$x$ in~$\var(\expr)$ is called
\begin{itemize}
    \item a \emph{singleton} variable if~$\sch(\expr)(x)\in\set{\tnode,\tedge}$;
    \item a \emph{conditional} variable if~$\sch(\expr)(x)=\tmaybe(\tau)$ for some~$\tau$;
    \item a \emph{group} variable if~$\sch(\expr)(x)=\tlist(\tau)$ for some~$\tau$;
    \item a \emph{path} variable if~$\sch(\expr)(x)=\tpath$.
\end{itemize}
\end{definition}

\begin{remark}\label{remark:schema is compositional}
    {\em It is easily checked that the function~$\sch$ is compositional,
    in the sense that $\sch(\expr_1\operator\expr_2)$,  
    for some binary operator $\operator$ can be computed by a function that depends only on~$\operator$ and takes as arguments~$\sch(\expr_1)$ and~$\sch(\expr_2)$ (and likewise for unary operators)}.
\end{remark}

\section{Semantics}
\label{sec:semantics}
We begin by defining \emph{values}, which is \emph{what can be returned by a query}. Since \gpml returns references to graph elements, not  the data they bear, elements of $\const$ are not values.

\begin{definition}
Given a type $\tau\in\TT$, the set $\VVtype{\tau}$ of \emph{values of type~$\tau$} is defined inductively as follows
\begin{itemize}
    \item $\VVtype{\tnode}=\NN$\,,\; $\VVtype{\tedge}=\EE_\mathsf{d} \cup \EE_\mathsf{u}$\,,\;  $\VVtype{\tpath}=\PP$\,;
    \item $\VVtype{\tmaybe(\tau)}=\VVtype{\tau} \cup \set{\tnothing}$ for a special value $\tnothing$;
    \item $\VVtype{\tlist(\tau)}$ is the set of all composite values of the form \[\listval\big((p_1,v_1),\ldots,(p_n,v_n)\big)\] where
    $n \geq 0$ and $p_i\in\VVtype{\tpath}$ and $v_i\in\VVtype{\tau}$ for all $i \in [1,n]$.
\end{itemize}
The set of all values is $\VV = \bigcup_{\tau\in\TT} \VV_\tau$.
\end{definition}

The semantics of \gpml is defined in terms of \emph{assignments} binding variables to values. Formally, an assignment $\mu$ is a partial function from $\XX$ to $\VV$, with finite domain. We write $\square$ for the empty assignment; that is, an assignment that binds no variables. The values bound to variables of a well-typed pattern or query should respect its schema:
we say that an assignment~$\mu$ \emph{conforms} to a schema~$\sigma$ if~$\dom(\mu)=\dom(\sigma)$ and $\mu(x)\in\VVtype{\sigma(x)}$ for all~$x\in \dom(\mu)$. 

Given two assignments $\mu$ and $\mu'$, we say that $\mu$ and $\mu'$ \emph{unify} if $\mu(x) = \mu'(x)$ for all $x \in \dom(\mu) \cap \dom(\mu')$. In that case, we define their \emph{unification} $\mu\cup\mu'$ by setting 
$(\mu \cup \mu')(x) = \mu(x)$ if $x\in\dom(\mu)$ and $(\mu \cup \mu')(x) = \mu'(x)$ otherwise,
for all $x\in\dom(\mu)\cup\dom(\mu')$.
If $S$ is a family of assignments that pairwise unify, then their unification is associative, and we write it as $\bigcup S = \bigcup_{\mu \in S}\mu$.

The semantics of a well-typed \gpml expression $\expr$  on a property graph $G$ is a pair $\big(\sch(\expr),
\sem{\expr}_G\big)$, where $\sch(\expr)$ is the schema of $\expr$ (see
Section~\ref{sec:types}), and $\sem{\expr}_G$ is the \emph{set of answers} to
$\expr$ on $G$.

An \emph{answer} to $\expr$ on $G$ is a pair $(\bar p,\mu)$, where $\bar{p}$ is
a tuple of paths in $G$, and $\mu$ is an assignment that conforms to
$\sch(\expr)$. If $\expr$ is a pattern, $\bar{p}$ consists of a single path $p$,
in which case we simply write $p$ instead of $(p)$. If $\expr$ is a query,
$\bar{p}$ contains one path for each joined pattern.

By Remark~\ref{remark:schema is compositional}, $\sch(\expr)$ can be computed
compositionally independently of~$\sem{\expr}_G$. In what follows, we shall
define $\sem{\expr}_G$ using $\sch(\expr)$ and $\sem{\expr'}_G$ for direct
subexpressions $\expr'$ of $\expr$. Hence, the semantics of expressions
(patterns and queries), i.e., the function $\expr\mapsto(\sch(\expr),
\sem{\expr}_G)$, is compositional.

For the remainder of this section, we consider a fixed property graph $G =
\langle N, E_\mathsf{d}, E_\mathsf{u}, \lambda, \endpoints, \src, \tgt, \dv\rangle$.

\subsubsection*{Semantics of atomic patterns}

For the sake of brevity, here we write atomic patterns as if all components were
present, but still allow the possibility that some of them may be absent. Hence,
$\nodelit{x:\ell}$ subsumes the cases $\nodelit{x}$, $\nodelit{:\ell}$, and
$\nodelit{}$.
\begin{align*}
  \semoneline{\sem{\nodelit{x:\ell}}_G}{\setst*{(\pathval(n),\mu)}
    {n\in N,\, \ell\in\lambda(n) \text{ if $\ell$ is present}}}
  \\
  \intertext{where $\mu=\{x\mapsto n\}$ if $x$ is present, and~$\mu=\square$ otherwise.}
  \semoneline{\sem{\rightlit{x:\ell}}_G}{\setst*{(\pathval(u_1,e,u_2),\mu')}
    {\begin{array}{@{}l@{}}e\in E_d,\\
        u_1=\src(e),~u_2=\tgt(e),\\
        \ell\in\lambda(e) \text{ if $\ell$ is present}\end{array}}}
  \\
  \semoneline{\sem{\leftlit{x:\ell}}_G}{\setst*{(\pathval(u_2,e,u_1),\mu')}
    {\begin{array}{@{}l@{}}e\in E_d,\\
        u_1=\src(e),~u_2=\tgt(e),\\
        \ell\in\lambda(e) \text{ if $\ell$ is present}\end{array}}}
  \\
  \semoneline{\sem{\undirlit{x:\ell}}_G}{\setst*{(\pathval(u_1,e,u_2),\mu')}
    {\begin{array}{@{}l@{}}e\in E_u,
        \\ 
        \endpoints(e)= \{u_1,u_2\},\\
        \ell\in\lambda(e) \text{ if $\ell$ is present}\end{array}}}
\end{align*}
where $\mu'=\{x\mapsto e\}$ if $x$ is present, and $\mu'=\square$ otherwise.
Observe that the pattern $\undirlit{x:\ell}$ returns both $\pathval(u_1,e,u_2)$
and $\pathval(u_2,e,u_1)$ if $\endpoints(e) = \{u_1,u_2\}$ with $u_1 \neq u_2$,
but only one of them if $u_1 = u_2$ since both paths are the same.

\subsubsection*{Semantics of concatenation}

\begin{align*}
  \semoneline{\sem{\pi_1 \, \pi_2}_G}{%
    \setst*{(p_1 \cdot p_2, \mu_1 \cup \mu_2)}{%
      \begin{array}{@{}l@{}}
        (p_i,\mu_i) \in \sem{\pi_i}_G \text{ for } i=1,2,\\
        p_1 \text{ and } p_2 \text{ concatenate}, \\
        \mu_1 \text{ and } \mu_2 \text{ unify} 
      \end{array}%
    }%
  }%
\end{align*}
The typing system ensures that all variables shared by $\pat_1$ and $\pat_2$ are
singleton variables (otherwise $\pat_1\pat_2$ would not be well-typed). In other
words, implicit joins over group and optional variables are disallowed (path
variables do not occur in patterns at all).

\subsubsection*{Semantics of union}

\begin{equation*}
  \sem{\pi_1 + \pi_2}_G =
  \setst*{(p, \mu \cup \mu')}{(p,\mu)\in\sem{\pat_1}_G\cup\sem{\pat_2}_G}
\end{equation*}
where $\mu'$ maps every variable in $\dom\big(\sch(\pi_1 + \pi_2)\big) \setminus
\dom(\mu)$ to $\tnothing$. Note that here we rely on $\sch(\pi_1 + \pi_2)$.

\subsubsection*{Semantics of conditioned patterns}

\begin{equation*}
  \sem{\pi\condlit{\theta}}_G = \setst*{ (p, \mu) \in
    \sem{\pi}_G}{\mu\models\theta}
\end{equation*}
where $\mu\models\theta$ is defined inductively as follows:
\begin{itemize}
\item $\mu\models (x.a =c)$ iff $\dv\bigl(\mu(x),a\bigr)$ is defined and
  equal to $c$;
\item $\mu\models (x.a=y.b)$ iff $\dv(\mu(x),a)$ and $\dv\bigl(\mu(y),b\big)$
  are defined and equal;
\item $\mu\models (\theta_1 \wedge \theta_2)$ iff $\mu\models\theta_1$ and $\mu\models\theta_2$;
\item $\mu\models (\theta_1 \vee \theta_2)$ iff $\mu\models\theta_1$ or $\mu\models\theta_2$;
\item $\mu\models(\neg \theta)$ iff $\mu\not\models\theta$.
\end{itemize}

\subsubsection*{Semantics of repeated patterns}

\begin{equation*}
\sem{\pi\quantlit{n..m}}_G = 
\bigcup_{\mathclap{i=n}}^{m} \sem{\pi}_G^i
\end{equation*}
Above, for a pattern $\pat$ and a natural number $n\geq 0$, we use
$\sem{\pat}_G^n$ to denote the $n$-th power of $\sem{\pat}_G$, defined as
follows. We let
\begin{align*}
  \semoneline{\sem{\pat}_G^0}{\setst{(\pathval(u),\mu)}{u\text{ is a node in }G}}
\end{align*}
where $\mu$ is the assignment that maps each variable in $\dom(\sch(\pat))$
to~$\listval()$, the empty composite value. For $n>0$, we let
$$\sem{\pat}_G^n\ = \ 
    \setst*{(p,\mu)}{\begin{array}{@{}l@{}}
        (p_1,\mu_1), \ldots, (p_n,\mu_n) \in \sem{\pat}_G\\
        p=p_1\cdot \ldots \cdot p_n\\
      \mu= \collect
      \big((p_1,\mu_1),\ldots,(p_n,\mu_n)\big) 
      \end{array}}
$$
where $\collect[(p_1,\mu_1),\ldots,(p_n,\mu_n)]$ is an assignment defined, and
discussed, below. In the case that $\pat$ does not have any variables, $\collect$ simply returns a function with empty domain. If $\pi$ does contain variables, then each such variable is mapped to a \emph{list}. (As such, nesting of patterns of the form $\pat^{n..m}$ leads to nesting of 
lists.)
 
There are several ways to define $\collect$ and obtain a sound semantics.
In all cases, $\collect$ takes as input any number of path/binding pairs $(p_1,\mu_1),\ldots,(p_n,\mu_n)$, such that~$n > 0$ and $p_1, \ldots,p_n$ concatenate to a path~$p=p_1\cdots p_n$. Furthermore, by our inductive definition of the semantics, it will always be the case that 
$\mu_1,\ldots,\mu_n$ all have the same domain, that we denote by~$D$.
Then, $\collect\big((p_1,\mu_1),\allowbreak \ldots,(p_n,\mu_n)\big)$ is an assignment that maps every~$x\in D$ to a $\listval((p_1',\allowbreak v_1),\ldots,(p_\ell',v_\ell))$ where each~$p'_i$ is a portion of the matched path (they collectively satisfy $p_1'\cdots p_\ell'=p_1\cdots p_n$) and~$v_i$ is the value associated to~$x$ for that portion.

If all $p_i$'s have a positive length (i.e., have at least one edge), $\collect$ is simply defined as follows.
\begin{multline}
\label{naive-collect-eq}
    \forall x\in D\quad
    \collect\big((p_1,\mu_1),\ldots,(p_n,\mu_n)\big)(x)= \\ \listval((p_1,\mu_1(x)),\ldots,(p_n,\mu_n(x)))
\end{multline}
Although $\collect$ is still well defined by \eqref{naive-collect-eq} if some of the $p_i$'s have length~$0$, the above definition may lead to infinite query results. To avoid this, we outline three different approaches. 

\paragraph{Approach 1: Syntactic restrictions} 
We add a syntactic restriction that prevents the case from ever appearing: pattern $\pat\quantlit{n..m}$ is forbidden if pattern~$\pat$ may match an edgeless path. The latter is defined inductively: every edge pattern is allowed; if $\pat$ is allowed then so are $\pat\condlit{\theta}$, $\pat\quantlit{\interval{n}{m}}$, $\pat\pat'$ and $\pat'\pat$ for every condition $\theta$, $n>0$, and pattern $\pat'$; if $\pat_1$ and $\pat_2$ are allowed then so is $\pat_1+\pat_2$. 
This is the solution adopted by the GQL standard : the \emph{minimum path length} of~$\pat$ must be positive. The drawback of this solution is that it rules out syntactically some patterns for which (\ref{naive-collect-eq}) would result in a well-defined finite semantics.

\paragraph{Approach 2: Run-time restriction} As an alternative, the precondition for $\collect$ well-definedness can be checked at run-time, i.e., it is only defined if all $p_i$'s have a positive length.
While not imposing any additional restrictions, this approach has a drawback that $\pat$ may have some result while $\pat\quantlit{1..1}$ has none, for some pattern $\pat$.

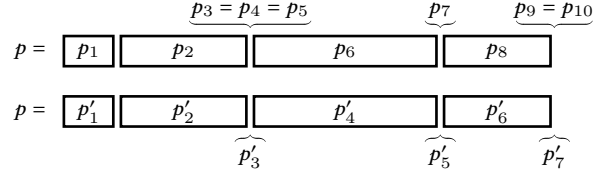
\begin{figure}[t]
\begin{tikzpicture}[yscale=.4,font=\small]
\newcommand{\absi}{0}
\newcommand{\absii}{0.75}
\newcommand{\absiii}{2.5}
\newcommand{\absiv}{5}
\newcommand{\absv}{6.5}

\newcommand{\horisep}{.05}

\newcommand{\rectii}{0.5}
\newcommand{\rectiii}{1.5}
\newcommand{\rectiv}{2}

\draw [draw=black,line width=1pt] 
    ($(\absi,\rectii)+(\horisep,0)$) 
    rectangle 
    ($(\absii,\rectiii)-(\horisep,0)$) 
    node[pos=0.5] (p1) {$p_1$}
    ;
\draw [draw=black,line width=1pt] 
    ($(\absii,\rectii)+(\horisep,0)$) 
    rectangle 
    ($(\absiii,\rectiii)-(\horisep,0)$) 
    node[pos=0.5] {$p_2$};
\draw [draw=black,line width=1pt] 
    ($(\absiii,\rectii)+(\horisep,0)$) 
    rectangle 
    ($(\absiv,\rectiii)-(\horisep,0)$) 
    node[pos=0.5] {$p_6$};
\draw [draw=black,line width=1pt] 
    ($(\absiv,\rectii)+(\horisep,0)$) 
    rectangle 
    ($(\absv,\rectiii)-(\horisep,0)$) 
    node[pos=0.5] {$p_8$};

\draw [decorate, decoration = {calligraphic brace}] 
    ($(\absiii,\rectiv)+(0.8,0)$,\rectiv)  
    --      
    ($(\absiii,\rectiv)-(0.8,0)$)
    node[anchor=base,pos=0.5,yshift=3pt] {$p_3=p_4=p_5$}
    ;
    
\draw [decorate, decoration = {calligraphic brace}] 
    ($(\absiv,\rectiv)+(0.2,0)$) 
    --
    ($(\absiv,\rectiv)-(0.2,0)$)   
    node[anchor=base,pos=0.5,yshift=3pt] {$p_7$}
    ;

\draw [decorate, decoration = {calligraphic brace}] 
    ($(\absv,\rectiv)+(0.5,0)$)
    -- 
    ($(\absv,\rectiv)-(0.5,0)$) 
    node[anchor=base,pos=0.5,yshift=3pt] {$p_9=p_{10}$};
\path let \p1=(p1.base) in (\absi,\y1) node[anchor=base east]{$p={}$};

\draw [draw=black,line width=1pt]    
    ($(\absi,-\rectii)+(\horisep,0)$) 
    rectangle 
    ($(\absii,-\rectiii)-(\horisep,0)$)  
    node[pos=0.5] (p1p) {$p'_1$};
\draw [draw=black,line width=1pt]   
    ($(\absii,-\rectii)+(\horisep,0)$) 
    rectangle 
    ($(\absiii,-\rectiii)-(\horisep,0)$)  
    node[pos=0.5] {$p'_2$};
\draw [draw=black,line width=1pt] 
    ($(\absiii,-\rectii)+(\horisep,0)$) 
    rectangle 
    ($(\absiv,-\rectiii)-(\horisep,0)$)  
    node[pos=0.5] {$p'_4$};
\draw [draw=black,line width=1pt]     
    ($(\absiv,-\rectii)+(\horisep,0)$) 
    rectangle 
    ($(\absv,-\rectiii)-(\horisep,0)$)   
    node[pos=0.5] {$p'_6$};

\draw [decorate, decoration = {calligraphic brace}]    
    ($(\absiii,-\rectiv)-(0.2,0)$)   
    -- 
    ($(\absiii,-\rectiv)+(0.2,0)$,\rectiv) 
    node[anchor=base,pos=0.5,yshift=-7pt] {$p'_3$}
    ;
\draw [decorate, decoration = {calligraphic brace}]
    ($(\absiv,-\rectiv)-(0.2,0)$) 
    --
    ($(\absiv,-\rectiv)+(0.2,0)$)  
    node[anchor=base,pos=0.5,yshift=-7pt] {$p'_5$}
    ;
\draw [decorate, decoration = {calligraphic brace}] 
    ($(\absv,-\rectiv)-(0.2,0)$)
    --
    ($(\absv,-\rectiv)+(0.2,0)$) 
    node[anchor=base,pos=0.5,yshift=-7pt] (pp7) {$p'_7$}
    ;
\path let \p1=(p1p.base) in (\absi,\y1) node[anchor=base east]{$p={}$};

\path
    let \p1=(current bounding box.east),
        \p2=(current bounding box.north) 
    in (\x1,\y2) coordinate (bbne);
\path
    let \p3=(current bounding box.west),
        \p4=(pp7.base)
    in (\x3,\y4) coordinate (bbsw);
\pgfresetboundingbox;
\path[use as bounding box] (bbne) rectangle (bbsw);
\end{tikzpicture}%
\caption{Refactorization of a path $p=p_1p_2\cdots p_{10}$ as $p=p'_1p'_2\cdots p'_{7}$ by grouping consecutive edgeless factors}
\label{fig:refactoring}
\end{figure}

\paragraph{Approach 3: Grouping edgeless paths} To overcome problems with the first two approaches, we propose a more general semantics of 
$\collect$ that groups together consecutive edgeless paths from $p_1,\cdots,p_n$. If no such paths exist, either due to syntactic restriction or ruling them out at run-time, the result of this approach coincides with (\ref{naive-collect-eq}); thus this approach subsumes the other two.

We define~$p'_1,\cdots,p'_{\ell}$ as a coarser factorization of~$p_1\cdots p_n$:
each $p'_i$ is the concatenation of successive $(p_j)$'s, in which consecutive edgeless paths are grouped together,
as shown in Figure~\ref{fig:refactoring}.
Formally, the $p_i'$s are defined as the unique path sequence such that there exists~$i_1<i_2<\cdots<i_{\ell+1}$ (delineating the boundaries of $p_1',\cdots,p_\ell'$) with~$i_1=1$, $i_{\ell+1}=n+1$ and satisfying the following.
\begin{align*}
    \forall k\in \set{1,\ldots,\ell}\quad &p_k' = p_{i_k}\cdots p_{i_{k+1}-1}
    \\
    \forall k\in\set{1,\ldots,\ell-1}\quad & \pathlen(p_{i_k}) \neq 0 \mathrel{\vee} \pathlen(p_{i_{k+1}}) \neq 0 
    \\
    \forall k\in \set{1,\ldots,\ell}\quad & \begin{cases}
    \text{either} & i_{k+1} = i_{k}+1 \text{ and } \pathlen(p_{i_k})\neq0\\
    \text{or}     & 
    \forall i~,i_k\leq i < i_{k+1},~\pathlen(p_{i})=0
    \end{cases}
    \end{align*}
The assignment $\collect\big((p_1,\mu_1),\ldots,(p_n,\mu_n)\big)$ is defined only if 
\begin{equation*}
    \forall k\in\set{1,\ldots,\ell}\quad \mu_{i_k},\ldots,\mu_{(i_{k+1}-1)}\text{ pairwise unify}
\end{equation*}
Then their unification is denoted by~$\mu'_k$ 
and $\collect$ is defined by 
\begin{multline*}
    \forall x\in D\quad
    \collect\big((p_1,\mu_1),\ldots,(p_n,\mu_n))(x) =\\
    \listval((p'_1,\mu'_1(x)),\cdots,(p'_\ell,\mu'_\ell(x))\big)
\end{multline*}

\begin{remark}
    For the purpose of $\collect$, one could use a weaker definition
    for unification that would allow~$\mu$ and~$\mu'$ to unify if,
    for every~$x\in\dom(\mu)\cap\dom(\mu')$, any of the following holds:
    $\mu(x)=\tnothing$, $\mu'(x)=\tnothing$ or $\mu(x)=\mu'(x)$.
    This would allow even more combinations than the definition above.
\end{remark}

\subsubsection*{Semantics of queries} 

\begin{align*}
  \semoneline{\sem{\rtrail\ \pi}_G}{%
    \setst*{ (p, \mu) \in \sem{\pi}_G}{%
      \begin{array}{@{}l@{}}
        \text{no edge occurs more}\\
        \text{than once in } p
      \end{array}%
    }%
  }
  \\
  \semoneline{\sem{\rsimple\ \pi}_G}{%
    \setst*{ (p, \mu) \in \sem{\pi}_G}{%
      \begin{array}{@{}l@{}}
        \text{no node occurs more}\\
        \text{than once in } p
      \end{array}%
    }%
  }
  \\
  \semtwoline{\sem{\shortest~\expr}_G}{
    \setst*{(p,\mu)\in \sem{\expr}_G}{%
      \begin{array}{@{}l@{}}
        \pathlen(p) = \min\setst*{\pathlen(p')}{
          \begin{array}{@{}l@{}}
            (p',\mu')\in \sem{\expr}_G\\
            \src(p')=\src(p)\\
            \tgt(p')=\tgt(p)\\
          \end{array}%
        }%
      \end{array}%
    }%
  }
\end{align*}
where $\expr$ is $\pat$, $\rtrail\ \pat$ or $\rsimple\ \pat$, for some
pattern $\pat$.
We then define:
\begin{align*}
  \sem{x = \restrictor{}\pat}_G & = 
    \big\{ (p, \mu \cup \{x \mapsto p\}) \mid
     (p,\mu) \in \sem{\rho\ \pat}_G \big\}%
     \\
  \sem{Q_1, Q_2}_G & =  %
    \left\{\,(\bar{p}_1 \times \bar{p}_2,\mu_1 \cup \mu_2)
    ~\middle|~
      \begin{array}{@{}l@{}}
        (\bar{p}_i,\mu_i) \in \sem{Q_i}_G \text{ for } i=1,2\\
        \mu_1 \text{ and } \mu_2 \text{ unify}
      \end{array}%
    \,\right\}
\end{align*}

\noindent Here, $\bar p_1 = (p_1^1, p_1^2, \dots, p_1^k)$ and $\bar p_2 = (p_2^1, p_2^2, \dots, p_2^l)$ are tuples of paths, and $\bar{p}_1 \times \bar{p}_2$ stands for $(p_1^1, p_1^2, \dots, p_1^k, p_2^1, p_2^2, \dots, p_2^l)$. Note that $\bar p_i$ is a single path when $Q_i$ does not contain the join operator.
Moreover, like for concatenation, the typing system guarantees that
$Q_1$ and $Q_2$ are only joined over singleton variables, but not over path,
group, or conditional variables.

In the results below we assume the third approach to the definition of
$\collect$ as subsuming the other two.
One may verify, by routine inspection, that the semantics is consistent with the typing system.

\begin{proposition}
\label{well-typed-prop}
For every well-typed expression~$\expr$ and every $(\bar{p},\mu)$ in $\sem{\expr}_G$, all paths in~$\bar{p}$ belong to~$\PP(G)$
and~$\mu$ conforms to $\sch(\expr)$.
\end{proposition}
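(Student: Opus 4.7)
The plan is to proceed by structural induction on the expression~$\expr$, establishing both parts of the claim simultaneously: every path in the tuple~$\bar p$ belongs to $\PP(G)$, and $\mu$ conforms to $\sch(\expr)$. The inductive invariant will implicitly carry the fact, noted in Proposition~\ref{l:type-sys is functional}, that a well-typed expression assigns a unique type to each of its variables, so that $\sch(\expr)$ is single-valued and, by Remark~\ref{remark:schema is compositional}, computable from the schemas of direct subexpressions.

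For the base cases, I would inspect each atomic pattern directly from its semantic clause. For a node pattern~$\nodelit{x:\ell}$, the singleton path $\pathval(n)$ belongs to $\PP(G)$ by definition, and if $x$ is present then $\mu(x)=n \in N \subseteq \VVtype{\tnode}$, matching the only derivable type. For the three edge patterns, the semantic constraints on $\src,\tgt,\endpoints$ implement exactly conditions (a), (b), (c) from the definition of a path in~$G$, so the produced two-node path is in $\PP(G)$; if $x$ is present, $\mu(x)\in\VVtype{\tedge}$. For the other inductive cases, most arguments are routine. Concatenation $\pat_1\pat_2$ preserves membership in $\PP(G)$ by the concatenation clause; the typing rules force shared variables to be of singleton type~$\tnode$ or~$\tedge$, and unifiability forces the bindings to agree, so $\mu_1\cup\mu_2$ conforms to $\sch(\pat_1\pat_2)$. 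For union $\pat_1+\pat_2$, the typing rules assign type $\tmaybe(\tau)$ to variables present on only one side, and the semantics extends the chosen side with $\tnothing$, which lies in $\VVtype{\tmaybe(\tau)}$. Conditioning and restrictors cut down the underlying semantics, so the induction hypothesis transfers unchanged. The naming rule $x=\restrictor\pat$ adds $x\mapsto p$ with $p\in\PP(G)=\VVtype{\tpath}$, matching the assigned type $\tpath$; the premise $x\notin\var(\pat)$ of the corresponding typing rule prevents a clash. Join of queries is handled exactly like concatenation at the level of assignments, with tuple concatenation of paths instead of path concatenation.

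The main obstacle is the repetition case $\sem{\pat\quantlit{n..m}}_G$, in particular under approach~3 to $\collect$. The zero-th power $\sem{\pat}_G^0$ is easy: the singleton-node path lies in $\PP(G)$, and the assignment mapping every variable to $\listval()$ conforms because $\listval()\in\VVtype{\tlist(\tau)}$ for every~$\tau$. For $n>0$, I would reason as follows: by the induction hypothesis each $(p_i,\mu_i)\in\sem{\pat}_G$ has $p_i\in\PP(G)$ and $\mu_i$ conforming to $\sch(\pat)$, which in particular forces $\dom(\mu_1)=\cdots=\dom(\mu_n)=D=\dom(\sch(\pat))$. Iterated concatenation then places $p=p_1\cdots p_n$ and each coarse factor $p'_k$ into $\PP(G)$. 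For the assignment component, the well-definedness precondition of $\collect$ under approach~3 demands pairwise unifiability of the $\mu_i$'s within each group, and the unified $\mu'_k$ assigns each $x\in D$ exactly the common value of the agreeing $\mu_i(x)$, hence a value still in $\VVtype{\sch(\pat)(x)}$. Therefore $\collect(\ldots)(x)=\listval\bigl((p'_1,\mu'_1(x)),\ldots,(p'_\ell,\mu'_\ell(x))\bigr)$ is a composite value whose second components each lie in $\VVtype{\sch(\pat)(x)}$, so the whole value lies in $\VVtype{\tlist(\sch(\pat)(x))}=\VVtype{\sch(\pat\quantlit{n..m})(x)}$, as required. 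Since approach~3 subsumes approaches~1 and~2, this takes care of all three semantic variants.
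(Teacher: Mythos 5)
Your proposal is correct and follows exactly the route the paper intends: the paper merely asserts that the proposition holds ``by routine inspection'' of the semantics against the typing rules, and your structural induction—including the careful treatment of $\collect$ under Approach~3, where the unified $\mu'_k(x)$ stays in $\VVtype{\sch(\pat)(x)}$ and the coarse factors $p'_k$ remain paths in $G$—is precisely that inspection spelled out. No gaps worth noting beyond the harmless implicit assumption that subexpressions of a well-typed expression are themselves well-typed, which the paper's type system guarantees.
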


Even though the set $\Paths(G)$ may be infinite, syntactic restrictions ensure finiteness of output.

\begin{theorem} \label{t:hereditarily finite}
$\sem{Q}_G$ is finite for each query $Q$ and graph $G$.
\end{theorem}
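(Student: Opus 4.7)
The plan is to proceed by structural induction on the query $Q$. The join case $Q = Q_1, Q_2$ is routine: $\sem{Q_1, Q_2}_G$ is a filtered subset of pairs drawn from $\sem{Q_1}_G \times \sem{Q_2}_G$, so finiteness is preserved by the induction hypothesis. The case $Q = x = \rho\,\pi$ reduces to the case $Q = \rho\,\pi$, since it merely augments each answer with the binding $x \mapsto p$. Thus the heart of the argument is to show that $\sem{\rho\,\pi}_G$ is finite for every restrictor $\rho$ and pattern $\pi$.

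To this end, we would first establish an auxiliary lemma by structural induction on $\pi$: for every pattern $\pi$ and every path $p \in \Paths(G)$, the set $\{\mu \mid (p,\mu) \in \sem{\pi}_G\}$ is finite. The atomic cases yield at most one assignment per fixed $p$; concatenation reduces to at most $\pathlen(p)+1$ choices of split point, each contributing finitely many matches by induction; union is a finite union followed by $\tnothing$-padding; conditioning only restricts. The delicate case is repetition $\pi\quantlit{n..m}$ with $m = \infty$ when $\pi$ can match edgeless paths, since then $p$ admits infinitely many $i$-fold decompositions. Here we exploit Approach~3 for $\collect$: only the grouped decomposition $(p'_1, \ldots, p'_\ell)$ of $p$ matters, and there are finitely many such groupings (bounded by partitions of the edge sequence of $p$ together with edgeless groups inserted at the at most $\pathlen(p)+1$ node positions). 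Within each group of consecutive edgeless matches at a node $u$, the entry $\mu'_k$ is the unification of a multiset of elements drawn from the finite set $\{\mu \mid (\pathval(u),\mu) \in \sem{\pi}_G\}$ (finite by IH), so only finitely many distinct unifications arise; within each positive-length group, the single matching assignment is one of finitely many options by IH. Aggregating across groups and grouped decompositions yields finitely many outputs $(p,\mu)$ for each fixed $p$.

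Equipped with the lemma, it remains to bound the set of first components in $\sem{\rho\,\pi}_G$. For $\rho = \rsimple$, any such $p$ has no repeated node, so its length is at most $|N|-1$, and paths of bounded length in the finite graph $G$ form a finite set; likewise for $\rho = \rtrail$, where the length is bounded by $|E_\mathsf{d}|+|E_\mathsf{u}|$. For $\rho = \shortest$, for each endpoint pair $(u,v) \in N \times N$ the quantity $\min\{\pathlen(p') \mid (p',\mu') \in \sem{\pi}_G,\ \src(p')=u,\ \tgt(p')=v\}$ is realised by a concrete matching path, hence is a finite number $k(u,v)$, and the paths of length exactly $k(u,v)$ between $u$ and $v$ in $G$ are bounded by $(|E_\mathsf{d}|+|E_\mathsf{u}|)^{k(u,v)}$. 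The combined restrictors $\shortest\,\rsimple$ and $\shortest\,\rtrail$ inherit finiteness from $\rsimple$ and $\rtrail$ directly. Multiplying the finite number of admissible paths by the per-path bound from the lemma gives that $\sem{\rho\,\pi}_G$ is finite, closing the base case.

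The main obstacle is the repetition case of the lemma: without the grouping mechanism of Approach~3, a single path $p$ could be realised by infinitely many decompositions whenever $\pi$ admits an edgeless match, and a naive counting argument would fail. The proof hinges on the fact that arbitrarily many consecutive edgeless factors collapse into a single entry in the output list, so distinct decompositions at the level of $(p_1,\mu_1),\ldots,(p_i,\mu_i)$ produce only finitely many distinct $\collect$-outputs, keeping the per-path output set finite.
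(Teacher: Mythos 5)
Your proposal is correct and takes essentially the same route as the paper: finitely many witnessing paths because every query carries a restrictor (with the $\shortest$ case handled per endpoint pair), combined with a lemma, proved by structural induction on the pattern, that for each fixed path $p$ only finitely many assignments $\mu$ satisfy $(p,\mu)\in\sem{\pi}_G$, where the only delicate case is unbounded repetition and the Approach-3 grouping of consecutive edgeless factors is what saves finiteness. The only inessential difference is how that case is closed: you count directly, noting that each output is determined by one of finitely many grouped factorizations of $p$ together with, per edgeless group at a node $u$, a unification of a subset of the finite set $\setst{\mu}{(\pathval(u),\mu)\in\sem{\pi}_G}$, whereas the paper's Lemma~\ref{l:finitely many assignments} fixes an explicit bound $B=(L+1)(M+1)$ and uses a pigeonhole argument to show that $(p,\mu)\in\sem{\pi}_G^n$ for $n>B$ already occurs in $\sem{\pi}_G^{n-1}$ --- both arguments rest on the same underlying finiteness fact.
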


\begin{proof}[Proof Sketch]
We will treat only the case when $Q$ is $\rho \pat$; other cases follow from this case or are straightforward.

Let us first show that the set~$P=\setst*{p}{\exists \mu~ (p,\mu)\in\sem{Q}_G}$ is finite.
If~$\rho$ is one of $\rtrail$, $\rsimple$, $\shortest\ \rtrail$ or $\shortest\ \rsimple$,
the claim holds since there are finitely many trails and simple paths in a graph.
The last case, that is~$\rho=\shortest$, follows from the fact that for all nodes $s$ and $t$ the set
\begin{equation*}
    P_{(s,t)}=\setst{p\in P}{p\text{ starts in }s\text{ and ends in }t}
\end{equation*}
is finite. Indeed, all paths in~$P_{(s,t)}$ have the same length, and there are finitely many paths of a given length in a graph.

The remainder of the proof of Theorem~\ref{t:hereditarily finite} amounts to showing that for each path~$p\in P$, there are finitely many~$\mu$ such that~$(p,\mu)\in\sem{Q}_G$.
The proof of that claim is done by induction, the only nontrivial case is for patterns of the shape~$\pat\quantlit{n..\infty}$.
For details, see Lemma~\ref{l:finitely many assignments}, page~\pageref{l:finitely many assignments}, in the Appendix.
\end{proof}

\section{Expressivity and Complexity}
\label{sec:complexity}

\paragraph{Expressive power.} First, we look at the expressive power of \gpml.
For this, we will compare GPC with main graph query languages considered in the research literature. Specifically, we compare \gpml with regular path queries (RPQs)~\cite{RPQ,MendelzonW95}, and their two-way extension, 2RPQs~\cite{C2RPQ}. In essence, an RPQ is specified via a regular expression and returns all pairs of nodes connected by a path whose edge labels form a word in the language of this expression. 2RPQs also allow traversing edges in the reverse direction, similarly to $\leftlit{a}$ in \gpml, for a label $a$. Two natural extensions are C2RPQs, which close 2RPQs under conjunctions~\cite{CRPQ,C2RPQ}, and their unions, called UC2RPQs~\cite{CRPQ}. An interesting class is also that of nested regular expressions (NREs)~\cite{NRE}, where along a path conforming to a regular language, we can test if there is an outgoing path conforming to another regular expression, as in PDL, or XPath. Finally, we consider the class of regular queries (RQs)~\cite{RRV17}, which subsumes all the aforementioned classes. A regular query is  a non-recursive Datalog program that is allowed to use \emph{transitive atoms} of the form $R^{+}(x,y)$ in the body of the rules, where $R$ is a binary predicate, either built in, or defined in the program. 

In order to compare with the aforementioned languages, we  consider a simple extension of \gpml with projection and union, reflecting the fact that the pattern matching mechanism we formalize will be a sublanguage of a fully-fledged query language like GQL or SQL/PGQ. A \emph{$\gpmlplus$ query} is a set of rules 
\[ \ans(\bar x)\colondash Q_1; \quad \ans(\bar x) \colondash Q_2; \quad \dots \quad \ans(\bar x) \colondash Q_k \]
where $Q_i$ is a \gpml query such that $\bar x \subseteq \mathrm{var}(Q_i)$ for all $i$.
The semantics of such a query on graph $G$ is  \[\sem{Q_1}_G^{\bar x} \cup \sem{Q_2}_G^{\bar x} \cup \dots \cup \sem{Q_k}_G^{\bar x}\]
where $\sem{Q_i}_G^{\bar x} = \left\{\, \mu(\bar{x}) \mid \exists p\, (p,\mu) \in \sem{Q_i}_G\right\}$ for all $i$. Notice that in our definition we allow unions only at the top level in order to combine results of queries whose arity is higher than binary. Binary unions are already covered at the level of \gpml patterns, and can be arbitrarily nested inside iterations.

\begin{theorem}
\label{prop:expr}
\gpmlplus  can express all of the following:
\begin{itemize}
\item unions of conjunctive two-way regular path queries (UC2RPQs);
\item nested regular expressions (NREs);
\item regular queries.
\end{itemize}
\end{theorem}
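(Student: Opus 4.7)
The plan is to give a compositional translation from each target formalism into $\gpmlplus$. For UC2RPQs, I first translate a regular expression $r$ over $\LL\cup\LL^{-1}$ into a GPC pattern $\pi_r$ by induction: an atomic forward label $a$ becomes $\rightlit{:a}$, an inverse $a^{-}$ becomes $\leftlit{:a}$, and the regular operators $\cdot,\cup,^*$ translate respectively to juxtaposition, $+$, and $\pat\quantlit{\interval{0}{\infty}}$. A 2RPQ $r(x,y)$ then becomes the $\gpmlplus$ rule $\ans(x,y) \colondash z = \shortest\ \nodelit{x}\pi_r\nodelit{y}$. A conjunctive 2RPQ is obtained by joining such bodies along shared head variables; because by the typing rules joins are permitted exactly over singleton variables, and because the endpoints of the translated pattern are singleton node variables, these joins are well-typed. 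Disjunction is handled at the top level via several $\gpmlplus$ rules with a common head $\bar x$. The restrictor $\shortest$ is essential here: by Theorem~\ref{t:hereditarily finite} the output is finite, while for every endpoint pair in the walk-semantics answer there is a shortest witnessing walk, so the projection to $(x,y)$ coincides exactly with the 2RPQ/CRPQ/UC2RPQ answer.

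For NREs, I proceed inductively, producing for each NRE $E$ a $\gpmlplus$ rule body $B_E$ with two designated variables $(s_E,t_E)$ whose projection computes the binary relation $\sem{E}_G$. On test-free fragments the construction mirrors the UC2RPQ case and stays within a single pattern. For a test $[F]$ occurring at a position of the enclosing pattern, I place a fresh node variable $m$ at that position and append a joined conjunct $z_F = \shortest\ \nodelit{m}\pi_F\nodelit{m'}$ with fresh $m'$; this pins the side-path to the test position through the singleton variable $m$. Tests nested inside $F$ are eliminated recursively in that conjunct, and any disjunction that would leave a mark variable only optionally bound is pushed up to the $\gpmlplus$ top-level union, so that on each disjunct the marks are genuine singletons and the joins are well-typed.

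For regular queries I exploit non-recursiveness: the IDB predicates are stratified and eliminated by unfolding. A non-transitive atom $R(\bar u)$ is replaced by the body of its defining rule, with multiple defining rules handled either by a pattern-level union $+$ or by lifting to several top-level $\gpmlplus$ rules. A transitive atom $R^{+}(x,y)$ becomes $\nodelit{x}\pi_R\quantlit{\interval{1}{\infty}}\nodelit{y}$, wrapped in $\shortest$, once the defining body of $R$ has been unfolded into a single GPC pattern $\pi_R$ with designated endpoints. The main obstacle I anticipate lies precisely in this last step, and symmetrically in NRE tests that sit inside Kleene stars: when a rule body does not linearise into a single path pattern (for instance, because it contains a side-test realised only via a join), iterating it inside a pattern is delicate, since pattern iteration binds the test position to a group variable that the typing rules forbid from appearing in further joins. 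The remedy is to normalise the source expression beforehand, introducing named intermediate IDBs to bring the to-be-iterated relation into path-shape; once this normalisation is done, the rest of the argument is a routine induction showing that $\sem{Q}_G^{\bar x}$ coincides with the target semantics.
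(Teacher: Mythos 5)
Your UC2RPQ translation and your join-based handling of NRE tests match what works when the test sits outside any iteration, but there is a genuine gap at exactly the point you flag, and your proposed remedy does not close it. Saying you will ``normalise the source expression beforehand, introducing named intermediate IDBs to bring the to-be-iterated relation into path-shape'' assumes the hard part: introducing auxiliary binary predicates cannot turn a branching body such as $P(x,y)\colondash a(x,y),\,b(x,z)$ into a chain of binary atoms from $x$ to $y$ (the branch at $x$ survives any Datalog-level renaming), and likewise no rewriting of an NRE removes a test from under a star. This is precisely the expressiveness question the theorem is about, not a routine preprocessing step. The missing idea --- the heart of the paper's proof --- is to linearise at the \emph{pattern} level, exploiting that \gpml paths are walks: a side test or side branch is encoded inside a single path pattern by walking into it and back, pinning the return point with a \emph{repeated singleton node variable}. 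For example $(a[b\quantlit{+}]c)\quantlit{+}$ is expressed by $\shortest\ \nodelit{x}\,[\,\rightlit{:a}\nodelit{z}\rightlit{:b}\quantlit{\interval{1}{\infty}}\nodelit{}\leftlit{}\quantlit{\interval{1}{\infty}}\nodelit{z}\rightlit{:c}\,]\quantlit{\interval{1}{\infty}}\nodelit{y}$: within each iteration the two occurrences of $z$ must unify, so the detour returns to the right node, and although $z$ globally becomes a group variable, no join over it is ever needed --- which is exactly what your join-based encoding cannot achieve under iteration. The same device, combined with free-navigation infixes $\big[\rightlit{}+\leftlit{}\big]\quantlit{\interval{0}{\infty}}$ to jump between the variables of a \emph{connected} rule body, turns every such body over already-translated predicates into a single pattern $\pi_P$, after which a transitive atom is simply $(\pi_R)\quantlit{\interval{1}{\infty}}$.

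A second omission concerns regular queries: rule bodies of predicates used under transitive closure may be \emph{disconnected}, and your top-level joins only help at the answer rule. The paper devotes a separate rewriting to this, distinguishing whether the two head variables lie in the same connected component, and arguing that a disconnected rule need be applied at most once along a $P^{+}$-traversal (respectively, that its stray components can be matched once and reused), before the linearisation above is applied. Without the there-and-back encoding and this disconnectedness elimination, your induction does not go through for NRE tests under stars or for transitive atoms of regular queries; with them, the rest of your plan (endpoint projection under $\shortest$, top-level unions, joins on singleton node variables) is sound and coincides with the paper's argument.
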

\begin{proof}[Proof sketch]
For 2RPQs, note that these are explicitly present in the syntax of \gpml patterns, and projecting on the endpoints gives us an equivalent expression. C2RPQs and their unions are then handled by the conjunction of \gpml queries, and unions in \gpmlplus, respectively. The case of NREs is a bit more interesting, and it contains the blueprint for regular queries. To illustrate the main ideas, consider the nested regular expression $(a[b\quantlit{+}]c)\quantlit{+}$, which looks for paths of the form $(ac)^n$, where after traversing an $a$, we also check the existence of a nonempty path labelled with $b$s.
An equivalent \gpmlplus query is $\ans(x,y) \colondash Q$ where $Q$ is the \gpml query
$$\shortest~ (x) [ \rightlit{:a} (z) \rightlit{:b}\quantlit{\interval{1}{\infty}} () \leftlit{}\quantlit{\interval{1}{\infty}} (z) \rightlit{:c}]\quantlit{\interval{1}{\infty}} (y)\,.$$
Basically, we introduce a fresh variable $z$ which binds to the node from which we need to find a nonempty $b$-labelled path to an anonymous node, and then we return to the same node, thus allowing us to encode an answer inside of a single path. Since we only care about the endpoints in all of these query classes, the restrictor $\shortest$ is enough. This idea is then applied inductively in order to capture regular queries.
\end{proof}

\paragraph{Complexity.}
When it comes to evaluating graph queries, one is used to dealing with high complexity. For instance, checking whether there is a query answer with a restrictor $\rsimple$ or $\rtrail$ on top is known to be $\np$-hard in data complexity~\cite{MendelzonW95,B13,Woo,BaganBG20,MartensNT20}, and yet this feature is supported both by the GQL standard~\cite{sigmod22} and by concrete languages such as Cypher~\cite{Cypher}. Accepting such high complexity bounds probably stems from the fact that query answers can be large in case of graph queries. 
Indeed, there is no need to use convoluted 3-SAT reductions to have the engine run forever;
one may just enumerate all simple paths in the graph with query: $\simplelit \rightlit{}\quantlit{\interval{0}{\infty}}$~.

In this light, we provide some insights on computing answers of \gpml queries, i.e., we study the following enumeration problem:

\medskip

\fbox{
    \vspace{-2mm}
  \begin{tabular}{lp{40ex}}
    \textsc{Problem}: & \textsc{Enumerate answers}\\
    \textsc{Input}: & A property graph $G$, and a query $Q$.\\
    \textsc{Output}: & Enumerate all pairs $(p,\mu)\in \sem{Q}_G$ without repetitions.
  \end{tabular}
    \vspace{-2mm}}

\medskip

A potential criticism we would like to address is the fact that the path $p$, witnessing the output mapping $\mu$, is also returned each time, which can make query answers larger than necessary. The reason that we study the problem like this, however, is that this is what the GQL standard asks for. 
In our analysis, we will use Turing machines with output tape (in order to enumerate the results), and will bound the size of work tape the machine uses. The main result of this section is the following:

\begin{theorem}
\label{th:compl}
The problem \textsc{Enumerate answers} can be solved by a Turing machine using exponential space (in $G$ and $Q$). If we consider the query $Q$ to be fixed (data complexity), then the machine uses only polynomial space.
\end{theorem}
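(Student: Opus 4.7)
The plan is to describe a straightforward enumeration procedure and to match its space consumption to the claimed bounds. The two key ingredients are (i) size bounds on individual answers $(p,\mu)$, which follow from the mandatory restrictor and from structural induction on $Q$, and (ii) a recursive membership test $(p,\mu)\in\sem{Q}_G$ that follows the compositional semantics of Section~\ref{sec:semantics}.

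First I would establish the size bounds. Because every pattern in a query is qualified by a restrictor $\restrictor\in\{\rsimple,\rtrail,\shortest,\shortest\,\rsimple,\shortest\,\rtrail\}$, each path appearing in an answer has length at most $|N|+1$ (simple, shortest paths have no repeated nodes, and a shortest path between $u,v$ is no longer than a simple one) or at most $|E_{\mathsf d}\cup E_{\mathsf u}|+1$ (trails); in any case the length is polynomial in $|G|$. An assignment $\mu$ has domain $\var(Q)$; singleton and path values have size polynomial in $|G|$, while a group value is a list whose length is bounded by the length of the matched subpath and whose entries are themselves values. A structural induction on $Q$ bounds the size of every value by $|G|^{O(d)}$, where $d$ is the nesting depth of iteration in $Q$. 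For fixed $Q$ this is polynomial in $|G|$; in general it is bounded by $|G|^{O(|Q|)}$, which fits in exponential space.

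With these bounds in hand, the enumeration algorithm is as follows: iterate lexicographically through all syntactically valid pairs $(p,\mu)$ of the above size, and for each invoke a deterministic membership test for $(p,\mu)\in\sem{Q}_G$, emitting $(p,\mu)$ whenever the test succeeds. The outer iteration only needs to store the current candidate and the pointer into the lexicographic order, which costs polynomial space (data) or exponential space (combined). The membership test is implemented by structural recursion mirroring Section~\ref{sec:semantics}: for atomic patterns it is immediate; for $\pat_1\pat_2$ we cycle through splittings $p=p_1\cdot p_2$ and compatible splits $\mu=\mu_1\cup\mu_2$ and recurse; for $\pat_1+\pat_2$ we branch on which summand contributes, filling absent variables with $\tnothing$; for $\pat\condlit{\theta}$ we just evaluate $\theta$ on $\mu$; for $\restrictor\,\pat$ and $x=\restrictor\,\pat$ we check the restrictor condition on $p$ (which is polynomial-time and only involves $G$); and queries $Q_1,Q_2$ reduce to independent recursive tests combined by unification. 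The recursion depth is bounded by the syntactic depth of $Q$, and the local state at each level consists of a constant number of indices into $p$ plus a sub-assignment of polynomial or exponential size depending on the complexity measure, so the overall work tape stays within the desired bound.

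The main obstacle will be the iteration case $\pat\quantlit{n..m}$, in particular when $m=\infty$ and $\pat$ may match edgeless paths, since naively one could try unboundedly many repetitions. Here I would rely on the same finiteness argument underlying Theorem~\ref{t:hereditarily finite} (Lemma~\ref{l:finitely many assignments} in the appendix): after applying $\collect$ with its grouping of consecutive edgeless factors, the number of relevant repetitions producing a given $(p,\mu)$ is bounded by a polynomial in $|p|$ together with the number of distinct sub-assignments, and the latter is controlled by our induction-based size bound. Concretely, when recursing into $\pat\quantlit{n..m}$ I would iterate $k$ from $n$ to $\min(m,|p|+c)$ where $c$ depends only on the already-bounded number of distinct edgeless sub-assignments for $\pat$, and for each $k$ enumerate the at most $|p|^k$ factorizations of $p$. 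A second delicate point is avoiding duplicate outputs: because we iterate over complete pairs $(p,\mu)$ at the outermost level and only ask the recursive procedure to decide membership, each answer is emitted exactly once even though the internal recursive splits are highly ambiguous.
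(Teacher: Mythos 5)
Your overall architecture (enumerate candidate pairs $(p,\mu)$ up to a size bound, then decide membership by structural recursion) is the same as the paper's, but two steps in your argument fail, both connected to the $\shortest$ restrictor. First, your size bound is wrong: the claim that ``a shortest path between $u,v$ is no longer than a simple one'' confuses the shortest path in the graph with the shortest path \emph{matching the pattern}. The restrictor $\shortest$ keeps, among the answers of $\pi$ with given endpoints, those of minimal length, and $\pi$ itself can force long, non-simple witnesses: already the fixed-shape query $x=\shortest\;\nodelit{\,}\rightlit{}\quantlit{\interval{k}{k}}\nodelit{\,}$ with $k$ in binary only has answers of length exactly $k$, which is exponential in $|Q|$ and unbounded in $|G|$ (this is precisely the construction behind Theorem~\ref{theo:complexity-lower}). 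So ``in any case the length is polynomial in $|G|$'' is false for combined complexity, and an enumeration cut off at polynomial length would simply miss answers. The correct bound is of the form $(|N|+|E_\mathsf{d}|+|E_\mathsf{u}|)\times 2^{O(|Q|)}$ (the paper's Lemma~\ref{lm:len}(c)), which requires an automata-style argument after unfolding the binary counters; it still fits in exponential space, and collapses to polynomial only when $Q$ is fixed, but your proposal never supplies this bound.

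Second, your membership test treats every restrictor as ``a condition on $p$ which is polynomial-time and only involves $G$''. That is true for $\rsimple$ and $\rtrail$, but $\shortest$ is not a local property of the candidate path: deciding whether $(p,\mu)\in\sem{\shortest\,\pi}_G$ requires certifying that \emph{no} answer $(p',\mu')\in\sem{\pi}_G$ with $\src(p')=\src(p)$ and $\tgt(p')=\tgt(p)$ is strictly shorter. As written, your decision procedure would accept non-shortest matches. This can be repaired either by nesting a second search over all shorter paths with the same endpoints (affordable within the space bounds, but it must be said and its space accounted for), or, as the paper does, by enumerating candidates in increasing length of $p$ and recording, per endpoint pair, the length of the first match found, so that later, longer matches with the same endpoints are suppressed. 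Your treatment of the iteration case via the grouping in $\collect$ and Lemma~\ref{l:finitely many assignments} is in the right spirit (it parallels the paper's Lemmas~\ref{l:bound-mu-size} and~\ref{lm:check}), but until the $\shortest$ length bound and the non-local $\shortest$ check are fixed, the proof does not go through.
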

\begin{proof}[Proof sketch]
The basic idea is to enumerate all possible answers $(p,\mu)$ in increasing length of $p$, and check, one by one, whether they should be output. If we consider a single pattern with a restrictor on top, e.g., $Q = \rho \pi$, this approach works as described, and the size of the possible paths (and thus also mappings $\mu$), can be bounded by a size that is exponential in the size of $Q$ and $G$, and polynomial if we assume $Q$ to be fixed. For each such answer, we can validate whether it should be output in polynomial space. Notice that once a result is output, we can discard it, and move to the next one. Enumeration stops once an appropriate path length has been reached, and the next mapping $\mu$ is considered. Joins can then be evaluated by nesting this procedure.
\end{proof}

\begin{theorem}\label{theo:complexity-lower}
  The problem \textsc{Enumerate answers} cannot be solved by a Turing machine using polynomial amount of space (in $G$ and $Q$). 
\end{theorem}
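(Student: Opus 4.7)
The plan is to exhibit, for each $m \geq 1$, a polynomial-size instance $(G_m, Q_m)$ on which the answer set $\sem{Q_m}_{G_m}$ has doubly-exponential cardinality, and then to argue that a polynomial-space Turing machine simply cannot emit that many distinct answers on its output tape.

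For the construction, I would take $G_m$ to be a fixed two-node graph with all four directed edges between them, including both self-loops, and let $Q_m = \shortest\ \rightlit{}\quantlit{\interval{N}{N}}$ with $N = 2^m$ written in binary. Then $|G_m| = O(1)$ and $|Q_m| = O(m)$, so the total input size is $n = O(m)$. The pattern $\rightlit{}\quantlit{\interval{N}{N}}$ matches precisely the directed walks of length $N$ in $G_m$; since all such matches share the same length, the restrictor $\shortest$ does not remove any of them. Because each node in $G_m$ has out-degree $2$, the set $\sem{Q_m}_{G_m}$ has exactly $2\cdot 2^N = 2^{2^m+1}$ elements, which is doubly exponential in $n$.

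Next I would bound the length of the output that a polynomial-space deterministic Turing machine with a write-only output tape can produce. If the work-tape space is $s(n) = \mathrm{poly}(n)$, then the number of distinct internal configurations (state, work-tape content, input-head and work-head positions) is $2^{O(s(n))}$. A valid enumerator must halt: otherwise, since its configuration sequence is deterministic and drawn from a finite pool, it would revisit some configuration and re-emit an already-produced answer, violating the ``without repetitions'' clause of the problem. Hence the run performs at most $2^{O(s(n))}$ steps, and since every symbol written to the output tape is produced by one such step, the total output length is bounded by $2^{O(s(n))} = 2^{\mathrm{poly}(n)}$.

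Combining the two observations, on input $(G_m, Q_m)$ any polynomial-space enumerator can write at most $2^{\mathrm{poly}(m)}$ symbols, whereas emitting even one symbol per answer requires at least $2^{2^m+1}$ symbols, a quantity that dominates $2^{\mathrm{poly}(m)}$ for all sufficiently large $m$, which is a contradiction. The main delicate step is the Turing-machine bookkeeping: one has to fix the convention that the write-only output tape does not contribute to the space bound (otherwise the statement would be trivial) and to justify that enumerators may be assumed to halt. Once those conventions are in place, the doubly-exponential blow-up coming from the binary-encoded quantifier $\quantlit{\interval{N}{N}}$, combined with the modest branching of $G_m$, delivers the desired separation in combined complexity.
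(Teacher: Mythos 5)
Your proof is correct and follows essentially the same route as the paper: a constant-size graph of out-degree two together with a binary-encoded exact repetition $\rightlit{}\quantlit{\interval{k}{k}}$ under $\shortest$ forces $2^{\Theta(2^n)}$ distinct answers, which a machine with only $2^{\mathrm{poly}(n)}$ work-tape configurations cannot enumerate without repetition. Your version merely spells out the Turing-machine bookkeeping (write-only output tape, halting, step bound) in more detail than the paper does, and uses self-loops in place of the paper's parallel labeled edges, which changes nothing essential.
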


\section{Looking ahead}
\label{sec:extensions}

In this section, we discuss possible extensions of \gpml\ that would reflect
additional features envisioned in GQL and SQL/PGQ. In doing so, we also provide
two examples of how theoretical research has directly influenced the drafts of
the GQL and SQL/PGQ standards as they were being written.

\paragraph{Placement of restrictors}

We imposed strict requirements for placing restrictors: optional $\shortest$
followed by optional $\traillit$ or $\simplelit$, with at least one of the three
present to ensure that the number of returned paths is finite. %
It is natural to wonder whether restrictors could be mixed arbitrarily, by
allowing patterns $\restrictor\,\pat$ where $\restrictor$ is one of $\shortest$,
$\traillit$, and $\simplelit$. In fact, this was an earlier proposal in the GQL
and PGQ drafts, which was then significantly modified.
To see why, consider the following graph

\begin{center} 
  \begin{tikzpicture}[node distance=2cm]
    \node[nodeid](l1){:$A$};
    \node[nodeid,right of=l1](l2){:$B$};
    \node[nodeid,right of=l2](l3){:$C$};
    
    \path[draw,->](l1) edge[right] node[below]{$e_2\!:\!a$} (l2);
    \path[draw,->](l3) edge[right] node[below]{$e_3$} (l2);
    \path[draw,->,bend left](l1) edge[bend left] node[above]{$e_1$}(l3);
  \end{tikzpicture}
\end{center}
(where $e_1, e_2$, and $e_3$ are edge ids) and the pattern
\begin{equation*}
  \traillit\Big[ \big[\shortest\ \nodelit{:A} %
  \rightlit{x}\quantlit{\interval{0}{\infty}} \nodelit{:B}\big] %
  \nodelit{:B} \leftlit{y:a}\quantlit{*} \nodelit{:A} \Big]\;.
\end{equation*}
Matching the subpattern outside $\shortest$ produces the
assignment of $y$ to the edge $e_2$. In the GQL rationale, $\shortest$ should restrict query answers in the sense that, out of all the answers to the query, it chooses the one with the shortest witness. If we follow this rationale, then, 
to keep the entire match a trail,
the group variable $x$ must be assigned the list $[e_1,e_3]$. Therefore,
counter-intuitively, a shortest match occurring under the scope of $\traillit$
produces a path that is not shortest between two nodes.

As a result, GQL pattern matching now disallows arbitrary mixing of
restrictors. At the same time, it is slightly more permissive than the version
presented here: $\shortest$ must appear at the top of a pattern, but $\traillit$
and $\simplelit$ can be mixed freely. Adding this feature is a possible
extension of \gpml.

\paragraph{Aggregation}

As one navigates along a path in a graph, aggregation is a natural feature for
computing derived quantities, such as path length. For instance, with
$\nodelit{:A} \rightlit{x}\quantlit{\interval{0}{\infty}}\nodelit{:B}$ looking for paths between
$A$ and $B$, one could return the total length $\sum x.\textit{length}$ of each
matched path.
However, adding aggregation is problematic. To see why, consider the simple
aggregate $\cnt x$ for a group variable $x$, which counts the number of
bindings of that variable. Now assume we extend the language with {\em
  arithmetic conditions} of the form $t_1 = t_2$, where $t_1$ and $t_2$ are
terms built from values $y.k$ and $\cnt x$ by means of addition ``$+$'' and
multiplication ``$\cdot$''. 
These already pack huge expressive power:

\begin{proposition}
  \label{diophantine-prop}
  The data complexity of \gpml\ with arithmetic conditions is undecidable.
\end{proposition}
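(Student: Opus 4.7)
The plan is to reduce from Hilbert's tenth problem. By Matiyasevich's theorem there are fixed integers $k, n \geq 1$ and a polynomial $U(a_1, \ldots, a_k, x_1, \ldots, x_n)$ with integer coefficients such that, given $\bar a^* \in \Z^k$, it is undecidable whether there exists $\bar m \in \N^n$ with $U(\bar a^*, \bar m) = 0$. I would fix such a $U$ once and for all and use its shape as the template for both a fixed query $Q$ and a uniform construction $\bar a^* \mapsto G_{\bar a^*}$.

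The fixed query I have in mind is
\begin{equation*}
  Q \;=\; \shortest\,\bigl[\,\nodelit{y}\,\rightlit{z_1}\quantlit{\interval{0}{\infty}}\nodelit{y}\,\cdots\,\nodelit{y}\,\rightlit{z_n}\quantlit{\interval{0}{\infty}}\nodelit{y}\condlit{\theta_U}\,\bigr],
\end{equation*}
where $\theta_U$ spells out $U(y.a_1, \ldots, y.a_k, \cnt{z_1}, \ldots, \cnt{z_n}) = 0$ as a single equality between $+, \cdot$-terms over property accesses $y.a_i$ and counts $\cnt{z_i}$. The graph $G_{\bar a^*}$ would be a single node $v$ with $\dv(v,a_i) = a_i^*$ and a single directed self-loop. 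Correctness would then follow by observing that each subpattern $\rightlit{z_i}\quantlit{\interval{0}{\infty}}$ traverses the loop some number $m_i \geq 0$ of times, and because $y$ is pinned to $v$ throughout the concatenation these choices are independent, so $\sem{\pi\condlit{\theta_U}}_{G_{\bar a^*}}$ is nonempty iff some $\bar m \in \N^n$ satisfies $U(\bar a^*, \bar m) = 0$. The restrictor $\shortest$ preserves nonemptiness, since it merely picks minimum-length witnesses per source-target pair, and here there is a single such pair $(v,v)$.

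The hard part will be coping with the mandatory top-level restrictor, whose entire purpose is to finitize the answer set. Picking $\rtrail$ or $\rsimple$ would bound $\sum_i \cnt{z_i}$ by the number of edges or nodes in $G_{\bar a^*}$ and thus render the problem decidable. My proposed use of $\shortest$ in isolation dodges this obstacle: it restricts by path length rather than by the allowed iteration counts, so every $\bar m \in \N^n$ still contributes a candidate path, and emptiness of $\sem{Q}_{G_{\bar a^*}}$ faithfully encodes unsolvability of the Diophantine equation. A minor side point is that the typing rules must be tacitly extended to admit $y.a_i$ and $\cnt{z_i}$ as numeric terms inside $\theta_U$, which is implicit in the setting of the proposition.
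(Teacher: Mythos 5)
Your reduction is correct, but it takes a genuinely different route from the paper. The paper starts from the bounded version of Hilbert's tenth problem (at most 58 unknowns, degree at most 4, after Jones) and obtains data-complexity undecidability by making the \emph{query} a fixed template of $5^{58}$ monomial gadgets while moving the coefficients of the polynomial into the \emph{graph}: a chain of nodes carrying a property $\textit{coeff}$, with an extra counting loop per monomial forced, via a condition $\cnt{y_i}=y_i.\textit{coeff}\cdot m_{\alpha_i}(\cnt{x_1},\dots,\cnt{x_{58}})$, to take the value of that monomial, and a final condition summing the counts. You instead invoke the stronger, universal form of the MRDP theorem: a single fixed polynomial $U(\bar a,\bar x)$ whose solvability in $\bar x\in\N^n$ is undecidable as the parameters $\bar a$ vary, so the whole polynomial is hard-coded into one arithmetic condition of a fixed query and the data only carries the parameters on a single node with one self-loop. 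What your approach buys is a much simpler instance (one node, one edge, no coefficient-encoding or per-monomial gadgets) and a transparent correctness argument, including the observation--which the paper glosses over by omitting restrictors--that $\shortest$ applied \emph{outside} the conditioned pattern preserves nonemptiness, whereas $\rtrail$ or $\rsimple$ would cap the iteration counts and kill the reduction. What the paper's approach buys is that it only needs the weaker bounded-Hilbert result and yields undecidability of combined complexity along the way. Two small points to tighten: by the paper's precedence convention, $\nodelit{y}\cdots\nodelit{y}\condlit{\theta_U}$ attaches the condition to the last node pattern only, which would not even type-check since $\theta_U$ mentions the $z_i$; write $\bigl[\nodelit{y}\cdots\nodelit{y}\bigr]\condlit{\theta_U}$ so the condition scopes over the whole concatenation, exactly as the paper does with $\pat_0\condlit{\cdots}$. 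Also, since terms are built only from $y.k$ and $\cnt{x}$ under $+$ and $\cdot$, the integer coefficients and the constant term of $U$ need a word (split $U=0$ into an equality of two positive-coefficient terms and read constants such as $0$ or $1$ off properties of the node, mirroring the paper's use of $u.k=0$); this is the same level of detail the paper itself leaves implicit.
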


\noindent In view of this, the current approach of GQL is to only allow
aggregates in the outputs of queries (no operations on them are permitted). But
it is a general open direction to understand how to tame and use the power of
aggregates in a graph language.

\paragraph{Bag semantics}

Relational calculus (first-order logic) is interpreted under set semantics, but
SQL uses bag semantics, and so does GQL. In the basic version of \gpml\ we opted
for set semantics, following the relational calculus precedent, but it is
necessary to study bag semantics as an extension of \gpml.

\paragraph{Nulls and bound conditional variables}

We have left out the treatment of nulls, assuming that conditions involving
non-applicable nulls --- i.e., values $\dv(x,k)$ where the property $k$ is not
defined for $x$ --- evaluate to {\em false}. Following SQL (and the current GQL
proposals) they would instead evaluate to {\em unknown}, leading to many known
issues \cite{CGLT20}. In addition, one could expand the language with a
predicate that checks whether a conditional variable is bound, as done, in fact,
in GQL.

\paragraph{Scoping of variables}

Consider the pattern $\nodelit{x} \rightlit{} \big[\nodelit{y} \rightlit{}
\nodelit{z}\big] \condlit{\theta}$, where $\theta$ is $x.k=y.k+z.k$; note, in
particular, the non-local occurrence of $x.k$ in the condition. Should this be
allowed? At first it seems innocent (we bind $x$ before evaluating the rest of
the pattern), but in a pattern like $\nodelit{x} \rightlit{} \big[ \nodelit{y}
\rightlit{} \nodelit{z} \big] \condlit{x.k+u.k=y.k+z.k} \rightlit{}
\nodelit{u}$ we need to evaluate the edge from $y$ to $z$ first. Moreover, if we
have repetitions, the evaluation procedure becomes much less clear. Despite
this, GQL plans to offer such kind of features.

\paragraph{Label expressions}

GQL will offer complex label expressions~\cite{sigmod22}, and these too can be added to the
calculus as an extension.

\paragraph{Enhancing conditions}

We assumed that data values come from one countable infinite set of constants,
very much in line with the standard presentations of first-order logic. In
reality, of course, data values are typed, and such typing must be taken into
account (at the very least, for the study of aggregation, to distinguish
numerical properties). Furthermore, one could permit explicit equalities between
singleton variables in conditions (currently, such variables implicitly join
when they are repeated in a pattern).
\begin{acks}
    The authors would like to thank the anonymous referees of PODS for
    their valuable comments and helpful suggestions. 
    This work is supported by:

    French's \grantsponsor{ANR}{Agence Nationale de la Recherche}{https://anr.fr} 
    under grant \grantnum{ANR}{ANR-18-CE40-0031};
    Poland's \grantsponsor{NCN}{National Science Centre}{https://www.ncn.gov.pl/} 
    under grant \grantnum{NCN}{2018/30/E/ST6/00042};
    ANID Fondecyt Regular project number 1221799.
\end{acks}

\bibliographystyle{ACM-Reference-Format}
\bibliography{references}

\appendix
\onecolumn 

\section{Semantics}
\label{sec:app-semantics}

The following lemma is needed for the proof of Theorem~\ref{t:hereditarily finite}.

\begin{lemma}\label{l:finitely many assignments}
    Let~$G$ be a graph,~$\pat$ be a well-typed pattern. For every path~$p$ of~$G$, there are finitely many~$\mu$ such that~$(p,\mu)\in\sem{\pat}_G$.
\end{lemma}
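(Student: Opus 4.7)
The plan is to proceed by structural induction on the pattern $\pat$. The base cases $\nodelit{x:\ell}$ and $\arclit{x:\ell}$ contribute at most one assignment per matching path, since the (optional) variable $x$ is forced to the unique node or edge of $p$. For concatenation $\pat_1\pat_2$, a path $p$ admits only $O(\pathlen(p))$ decompositions $p=p_1\cdot p_2$, each yielding finitely many $(\mu_1,\mu_2)$ by induction hypothesis and hence finitely many unifications. For conditioning, the assignment set can only shrink. For union, we take the union of two finite sets, possibly padded by $\tnothing$ on a finite set of variables.

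The real content of the lemma is the repetition case $\pat\quantlit{n..m}$. The bounded case $m<\infty$ is unproblematic, as $p$ admits only finitely many factorizations into at most $m$ parts, each contributing finitely many assignments by induction hypothesis and hence finitely many values of $\collect$. The difficult case is $m=\infty$: now $p$ can be written as $p_1\cdots p_k$ with $k$ arbitrarily large, because edgeless matches (single-node paths at each vertex of $p$) may be inserted unboundedly often.

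To handle this, I would exploit the refactorization built into the definition of $\collect$ under Approach~3. First, I would bound the length $\ell$ of the refactored decomposition $p'_1\cdots p'_\ell$: because no two adjacent $p'_k$'s are both edgeless and every non-edgeless $p'_k$ consumes at least one edge of $p$, we obtain $\ell\leq 2\,\pathlen(p)+1$, and a refactorization is fully determined by choosing at most this many split points in $p$, of which there are only finitely many. Then, for each fixed refactorization, I would argue that each unified assignment $\mu'_k$ ranges over a finite set. If $p'_k$ is non-edgeless, the induction hypothesis applied to $\pat$ on $p'_k$ directly provides finitely many candidate assignments. If $p'_k$ corresponds to a group $p_{i_k},\ldots,p_{i_{k+1}-1}$ of edgeless paths, they must all equal the same single-node path $\pathval(u)$; each $\mu_{i_j}$ is then drawn from the finite set $\{\mu\mid(\pathval(u),\mu)\in\sem{\pat}_G\}$ supplied by the induction hypothesis, and $\mu'_k$ is the unification of a pairwise-unifying subfamily, which by commutativity and idempotence of unification produces only finitely many distinct outcomes.

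The main obstacle is precisely this edgeless-grouping argument: without the merging performed by Approach~3, the list produced by $\collect$ would grow with the multiplicity of inserted single-node matches and yield infinitely many assignments. The crux of the proof is to justify that the grouping collapses the unboundedly many ways of interleaving edgeless repetitions into finitely many unification outcomes, which is precisely what the definition of $\collect$ in Approach~3 is engineered to achieve. Combining the bound on the number of refactorizations with the finite bound on each $\mu'_k$ finishes the induction.
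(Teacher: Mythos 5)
Your proof is correct, but it reaches the conclusion by a different key step than the paper. The paper also isolates the $\pat\quantlit{h..\infty}$ case and uses the same two ingredients you do --- the Approach~3 refactorization with its length bound $\ell\leq 2\,\pathlen(p)+1$ and the induction hypothesis giving a finite set (of size $M$) of assignments matching each single-node path --- but it then proves a \emph{stabilization} statement: for $B=(L+1)(M+1)$ with $L=\pathlen(p)$, every $(p,\mu)\in\sem{\pat}_G^n$ with $n>B$ already lies in $\sem{\pat}_G^{n-1}$. This is shown by a pigeonhole argument: some edgeless block must contain more than $M$ factors, hence two equal assignments $\mu_i=\mu_j$, and dropping one leaves the block's unification, and therefore the output of $\collect$, unchanged. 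Finiteness then follows since only the powers up to $B$ matter. You instead bound the \emph{image} of $\collect$ directly: finitely many refactorizations of $p$, and for each block finitely many possible unified assignments $\mu'_k$ --- for a non-edgeless block (which is a singleton factor by the refactorization's definition, a point worth stating explicitly) via the induction hypothesis on $p'_k$, and for an edgeless block via the observation that the unification of a pairwise-unifying family drawn from a finite set depends only on the underlying subset, by commutativity, associativity and idempotence of unification. Your route avoids the explicit bound $B$ and the separate (implicit in the paper) step that each finite power contributes finitely many assignments, at the cost of a small combinatorial count of decompositions; the paper's route yields the quantitative stabilization bound $B$, which is also the form of the statement its proof sketch in the body alludes to. Both arguments are sound and hinge on the same structural features of Approach~3.
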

\begin{proof} 
By induction over the structure of~$\pat$.  The lemma is obviously true for all base cases, 
and all inductive cases are easy except for a repeated pattern of the form~$\pat\quantlit{h..\infty}$, for some~$h$.
\begin{equation*}
    \sem{\pat\quantlit{h..\infty}}_G = \bigcup_{i=h}^{\infty} \sem{\pat}_G^i ~.
\end{equation*}
The proof amounts to showing that there exists a bound~$B$ such that
\begin{equation}\label{eq:lemma-reduce-n}
    \forall n\mathbin{>}B,~
    \forall\mu, (p,\mu)\mathbin{\in}\sem{\pat}_G^n\quad 
    (p,\mu)\in\sem{\pat}_G^{n-1}~.
\end{equation}
We fix~$B=(L+1)(M+1)$ where
\begin{align*}
    L ={}& \pathlen(p) \\
    M ={}& \max \setst*{ 
                \card\setst{\mu}{(\pathval(u),\mu)\in\sem{\pat}_G} 
              }{
                u\text{ is a node in }p
            }%
\end{align*}
Note that $M$ is well defined since by IH, $\setst{\mu}{(\pathval(u),\mu)\in\sem{\pat}_G}$ is finite for every node~$u$.

Let~$n$ and~$\mu$ be as in \eqref{eq:lemma-reduce-n}. 
Since~$(p,\mu)\mathbin{\in}\sem{\pat}_G^n$, there are paths~$p_1,\ldots, \allowbreak p_n$
and assignments~$\mu_1,\ldots,\mu_n$ such that~$p=p_1p_2\cdots p_n$ and~$\mu=\collect((p_1,\mu_1),\ldots,(p_k,\mu_n))$.

We let~$\ell$ and the sequences~$(i_k)_{1\leq k < \ell}$ and~$(\mu'_k)_{1\leq k < \ell}$ be defined as in Section~\ref{sec:semantics}. We recall the definition below
for convenience: the $p_i'$s are defined as the unique path sequence such that there exists~$i_1<i_2<\cdots<i_{\ell+1}$ (delineating the boundaries of $p_1',\cdots,p_\ell'$) with~$i_1=1$, $i_{\ell+1}=n+1$ and satisfying the following.
\begin{align*}
    \forall k\in \set{1,\ldots,\ell}\quad &p_k' = p_{i_k}\cdots p_{i_{k+1}-1}
    \\
    \forall k\in\set{1,\ldots,\ell-1}\quad & \pathlen(p_{i_k}) \neq 0 \mathrel{\vee} \pathlen(p_{i_{k+1}}) \neq 0 
    \\
    \forall k\in \set{1,\ldots,\ell}\quad & \begin{cases}
    \text{either} & i_{k+1} = i_{k}+1 \text{ and } \pathlen(p_{i_k})\neq0\\
    \text{or}     & 
    \forall i~,i_k\leq i < i_{k+1},~\pathlen(p_{i})=0
    \end{cases}
\end{align*}

Note that~$\ell$ is at most~$2L+1$; that the number of paths~$\pi'_k$ such that~$\pathlen(\pi'_k)=0$ is at most~$L+1$; and that the number of paths~$\pi'_k$ such that~$\pathlen(\pi'_k)\neq 0$ is at most~$L$, in which case~$(i_{k+1}-i_k)=1$.
Since~$n>(L+1)(M+1)$, it follows that
\begin{align*}
    \sum_{\substack{k \\ 1\leq k < \ell\\\pathlen(p'_k)=0}} \kern-8pt 
    (i_{k+1}-i_k) = n -\kern-8pt  \sum_{\substack{k \\ 1\leq k < \ell\\\pathlen(p'_k)\neq0}}\kern-8pt 
    (i_{k+1}-i_k) >{}& (L+1)(M + 1) -L\\
    >{}& (L+1)M + 1
\end{align*}
hence there exists~$k$ such that~$\pathlen(p'_k)=0$ and~$(i_{k+1}-i_k)>M$.
From the definition of~$M$, it holds~$M \geq \card\setst*{\mu}{(p'_k,\mu)\in\sem{\pat}_G}$,
hence by the pigeon-hole principle, there are~$i,j$ such that~$i_k \leq i <j < i_{k+1}$ and~$\mu_i=\mu_j$.
It follows that
\begin{equation*}
    \bigcup_{\substack{x\\i_k\leq x < i_{k+1}}} \mu_x 
    =
    \bigcup_{\substack{x\\i_k\leq x < i_{k+1}\\x\neq j}} \mu_x~.
\end{equation*}
Note that the left part above is the definition of~$\mu_k'$, hence
\[
\collect((p_1,\mu_1),\ldots,(p_n,\mu_n))=
\collect((p_1,\mu_1),\ldots,(p_{j-1},\mu_{j-1}),(p_{j+1},\mu_{j+1}),\ldots,(p_n,\mu_n))
\]
or, in other words,~$(p,\mu)\in\sem{\pat}_G^{n-1}$.
\end{proof}

\section{Expressivity}
\label{sec:app-expressivity}

We prove that each regular query can be expressed in \gpmlplus. Because regular queries work in a simpler graph model in which there are no properties and undirected edges, we will need neither conditioned patterns nor undirected edge patterns. In fact, that data model also does not include node labels, but it is straightforward to include them, which we do. 

Consider a Datalog program defining a regular query. We first rewrite the program in such a way that that all user predicates except for the answer predicate $\ans$ are binary and are defined with rules whose bodies are connected. Towards this goal, we first eliminate all occurrences of user defined predicates in non transitive atoms, by substituting their definitions exhaustively. User predicates not used in any rule body should be also exhaustively removed. Now, each remaining non-answer user predicate is only used in transitive atoms and it remains to eliminate rules whose bodies are disconnected. Consider such a predicate that has among its defining rules a disconnected rule \[P(x_1,x_2) \colondash P_1(y_1, z_1), P_2(y_2, z_2), \dots, P_k(y_k, z_k)\] where $y_1, y_2, \dots y_k$ and $z_1, z_2, \dots z_k$ are variables from the set $\{x_1, x_2, \dots, x_m\}$ for some $m\geq 2$. Note that components that are disconnected from both $x_1$ and $x_2$ are essentially global Boolean side conditions: if we want to use the rule, we must check that they can be matched somewhere, but if we use the rule multiple times, there is no need to check it again. We shall now consider two cases, depending on whether $x_1$ and $x_2$ are in the same component or not. 

\begin{itemize}
\item Suppose first that $x_1$ and $x_2$ are in different connected components of the body of the rule above. Let us remove this rule from the program (keeping the other rules for $P$) and add a new rule \[\dot P(x_1,x_2) \colondash P_1(y_1, z_1), P_2(y_2, z_2), \dots, P_k(y_k, z_k)\] where $\dot P$ is a fresh predicate. If we now replace each occurrence of $P^{+}(x,y)$ with \[P^{+}(x,x'), \dot P(x',y'), P^{+}(y',y) \quad  \text{or} \quad  \dot P(x,y'), P^{+}(y',y) \quad  \text{or} \quad P^{+}(x,x'), \dot P(x',y) \quad  \text{or} \quad \dot P(x,y) \quad  \text{or} \quad P^{+}(x,y)\] and then eliminate $\dot P$ by substituting its definition, we obtain an equivalent program. The reason why it is equivalent is that when traversing the graph with $P$, there is no need to use the disconnected rule more than once. Indeed, whenever we have a sequence of nodes $u_1, u_2, \dots, u_t$ such that $P(u_i, u_{i+1})$ holds for all $i<t$ under the original definition of the predicate $P$, we also have that $P(u_i, u_{i+1})$ or $\dot P(u_i, u_{i+1})$ for all $i<t$ under the new definition of $P$. If the latter holds for more than one value of $i$, let $i_{\min}$ be the minimal and $i_{\max}$ the maximal of those values. We have $i_{\min} < i_{\max}$. It follows that $\dot P(u_{i_{\min}}, u_{i_{\max} +1})$ holds and we can remove $u_{i_{\min} + 1}, \dots, u_{i_{\max}}$ from the  sequence. Traverses that use the disconnected rule at most once are captured by the modified program.

\item The remaining possibility is that $x_1$ and $x_2$ are in the same connected component of the body of the rule above. This time we also eliminate the disconnected rule from the definition of $P$ (keeping the other rules for $P$), but additionally we define two new predicates, $\dot P$ and $\ddot P$. The definition of $\dot P$ is obtained from the original definition of $P$ by taking all rules for $P$ but replacing $P$ in the head with $\dot P$, except for the rule with disconnected components, in which we remove all connected components that contain neither $x_1$ nor $x_2$; these components are collected in a single rule defining a fresh predicate $\ddot P (z,z)$ where $z$ is an arbitrary variable used in any of these components. Now, each transitive atom $P^{+}(x,y)$ can be replaced with $P^{+}(x,y)$ or $\dot P^{+}(x,y) , \ddot P(z,z)$ where $z$ is a fresh variable, and $\ddot P$ can be eliminated by substituting its definition. The purpose of $P^{+}(x,y)$ is to maintain all derivations of $P$ that don't use the disconnected rule that we eliminated. 
The replacement with $\dot P^{+}(x,y) , \ddot P(z,z)$ is correct because when traversing the graph with $P$ it is enough to match the disconnected components of the considered rule once, as this match can be reused whenever the rule is applied. Indeed, consider a sequence of nodes $u_1, u_2, \dots, u_t$ such that $\dot P(u_i, u_{i+1})$ for all $i<t$ and $\ddot P(u)$ for some node $u$. If the modified rule is never used, we have $P(u_i, u_{i+1})$ for all $i<t$ under the old definition of $P$. If the modified rule is used, then we can extend each match of its body to the remaining connected components of the original rule's body by using the match of $\ddot P$ at node $u$.
\end{itemize}
This way we can eliminate all disconnected rules from the definitions of predicates used in transitive atoms. As the only remaining user predicate is the answer predicate, we are done. 

For technical convenience, let us further modify the resulting program so that for each user predicate $P$,
\begin{itemize}
    \item either $P$ is defined with a single rule of the from  \[P(x,x) \colondash A(x)\quad \text{or} \quad  P(x,y) \colondash a(x,y) \quad \text{or} \quad P(x,y) \colondash R^{+}(x,y)\] where $A$ is a node label, $a$ is an edge label, and $R$ is a user predicate, 
    \item or $P$ is defined by rules whose bodies are conjunctions of binary user predicates (without transitive closure). 
\end{itemize}
Note that this can be done without introducing rules with disconnected bodies, so it does not break the first stage of the preprocessing. We are now ready to construct an equivalent $\gpmlplus$ query. 

First, for each binary non-answer user predicate $P$ in the program, we construct by structural induction a \gpml pattern $\pi_P$ such that, for each graph $G$, we have that $P(u,u')$ holds in $G$ iff $(u,u') \in \sem{(x)\pi_P(y)}^{x,y}_G$ where $x$ and $y$ are fresh variables not used in $\pi_P$. The base cases are predicates defined with a single rule of the form \[P(x,x)\colondash A(x) \quad  \text{or} \quad P'(x,y)\colondash a(x,y)\,;\] for such predicates we use \[\pi_P = \nodelit{:A} \quad \text{and} \quad \pi_{P'}=\,\rightlit{:a}\,.\] For predicates defined with a single rule of the form \[P(x,y) \colondash R^{+}(x,y)\] we take the pattern \[\pi_P = (\pi_R)^{1..\infty}\,.\]
Finally, consider a non-answer predicate $P$ defined with $n$ rules using only binary user predicates (without transitive closure). We shall construct path patterns $\pi_{P,i}$ (using fresh variables) capturing the rules defining $P$ and then let \[\pi_P = \pi_{P,1} + \pi_{P,2} + \dots +\pi_{P,n}\,.\] 
Consider the $i$th rule defining $P$, say \[P(x_1,x_2) \colondash P_1(y_1,z_1), P_2(y_2,z_2),\dots, P_k(y_k,z_k)\] where $y_1, y_2, \dots y_k$ and $z_1, z_2, \dots z_k$ are variables from the set $\{x_1, x_2, \dots, x_m\}$ for some $m\geq 2$. Let $\pi_{P_1},\pi_{P_2}, \dots, \pi_{P_k}$ be the path patterns obtained inductively for predicates $P_1, P_2, \dots, P_k$, each using  fresh variables. We let \[\pi_{P,i} = (x_1) \,\big[\rightlit{} + \leftlit{}\big]^{0..\infty}\,
(y_1)\,\pi_{P_1}\,(z_1) \,\big[\rightlit{} + \leftlit{}\big]^{0..\infty}\, 
(y_2)\,\pi_{P_2}\,(z_2) \,\big[\rightlit{} + \leftlit{}\big]^{0..\infty}\,
\dots \,\big[\rightlit{} + \leftlit{}\big]^{0..\infty}\, 
(y_k)\,\pi_{P_k}\,(z_k) \,\big[\rightlit{} + \leftlit{}\big]^{0..\infty}\, 
(x_2)\,.\]
Because all rule bodies for non-answer predicates are connected, the body of the analyzed rule will always be matched in a connected fragment of the graph. Hence, the auxiliary patterns $\big[\rightlit{} + \leftlit{}\big]^{0..\infty}$ used to move from variable to variable do not affect the semantics. 

With the patterns $\pi_P$ at hand, we can translate the whole regular query. The answer predicate need not be binary and the bodies of the rules  defining it need not be connected, but we are at the very top of the Datalog program and we can simply use the join operator, followed by projection and union. 
Consider a rule defining the answer predicate, 
\[\ans(x_1,x_2,\dots, x_l) \colondash P_1(y_1,z_1), P_2(y_2,z_2),\dots, P_k(y_k,z_k)\] where $y_1, y_2, \dots, y_k$ and $z_1, z_2, \dots, z_k$ are variables from the set $\{z_1, z_2, \dots, z_m\}$ for some $m\geq l$.
We replace this rule with
\[\ans(x_1,x_2)\colondash\shortest\, (y_1)\pi_{P_1}(z_1), \shortest \, (y_2)\pi_{P_2}(z_2),\dots,\shortest \, (y_k)\pi_{P_k}(z_k)\]
where $\pi_{P_1}, \pi_{P_2}, \dots, \pi_{P_k}$ are the patterns (over fresh variables) obtained inductively for predicates $P_1, P_2, \dots, P_k$. Because we are only interested in the \emph{existence} of paths connecting $y_i$ to $z_i$, not the paths themselves, we can safely apply the restrictor $\shortest$. The resulting collection of rules constitutes a \gpmlplus query equivalent to the regular query defined by the considered Datalog program. 
\section{Complexity}
\label{sec:app-complexity}

By $|\pi|$ we denote the ``structural'' size of $\pi$, that is, the number of nodes in its parse tree plus the number of bits needed to represent numbers $n$ and $m$ in subexpressions of the form $\pi^{n..m}$. For a path $p$, we denote by $|p|$ the total number of node and edge ids in $p$. For a mapping $\mu$, we denote by $|\mu|$ the total length of paths occurring in $\mu$ plus the number of occurrences of variables in $\mu$.

\begin{lemma}
\label{lm:len}
Let $\rho \pi$ be a pattern with a restrictor on top. For a property graph $G$, if $(p,\mu)\in \sem{\rho \pi}_G$, then
\begin{enumerate}[(a)]
    \item $\pathlen(p)\leq |N|$, whenever $\rho$ is $\rsimple$;
    \item $\pathlen(p)\leq |E_d|+|E_u|$, whenever $\rho$ is $\rtrail$;
    \item $\pathlen(p)\leq (|N|+|E_d|+|E_u|)\times 2^{|\pi|}$, whenever $\rho$ is $\shortest$.
\end{enumerate}
\end{lemma}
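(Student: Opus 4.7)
Cases~(a) and~(b) will follow directly from the restrictor semantics. When $\rho = \rsimple$, the nodes in~$p$ are pairwise distinct, so the $\pathlen(p)+1$ node occurrences fit into the $|N|$ available ids, giving $\pathlen(p)\leq |N|-1$. When $\rho = \rtrail$, the edges in~$p$ are pairwise distinct and drawn from $E_d\cup E_u$, so $\pathlen(p)\leq |E_d|+|E_u|$. In both cases the claimed bound is then immediate.

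For case~(c), the plan is a product-automaton argument. I would first translate~$\pi$ into a nondeterministic finite automaton~$A_\pi$ over node and edge identifiers whose accepting runs on a path $p$ in $G$ correspond exactly to matchings $(p,\mu)\in\sem{\pi}_G$. The translation would go by induction on~$\pi$: atomic patterns give constant-size automata; union and concatenation use Thompson-style constructions; conditioning $\pi'\condlit{\theta}$ is absorbed into the automaton by extending its state with the property values of already-bound singleton variables that $\theta$ needs for later checks; and the key case of a repetition $\pi'\quantlit{n..m}$ is handled with a binary counter of $\lceil\log_2(m{+}1)\rceil$ bits, which is legitimate because $n$ and $m$ contribute to $|\pi|$ in binary rather than in unary. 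With careful accounting this keeps $|A_\pi|\leq 2^{|\pi|}$. I would then form the product graph $G\times A_\pi$ whose states are pairs $(u,q)$ with $u\in N$ and $q$ a state of~$A_\pi$; a path $p'$ in~$G$ from $u$ to $v$ arises as the first component of some $(p',\mu')\in\sem{\pi}_G$ iff the product contains a walk from $(u,q_0)$ to $(v,q_f)$ for some initial~$q_0$ and accepting~$q_f$. A shortest such walk visits each of the $|N|\cdot 2^{|\pi|}$ product states at most once, bounding $\pathlen(p)\leq |N|\cdot 2^{|\pi|}\leq (|N|+|E_d|+|E_u|)\cdot 2^{|\pi|}$ for every $(p,\mu)\in\sem{\shortest\ \pi}_G$, since $\sem{\shortest\ \pi}_G$ retains only paths of minimum length between their endpoints.

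The main obstacle I expect is keeping the size of~$A_\pi$ inside the promised~$2^{|\pi|}$ bound while simultaneously accommodating binary-encoded repetition counts, conditioning~$\condlit{\theta}$ that bridges several positions of the match, and arbitrary nesting. Each of these features is well understood in isolation, but the inductive accounting must be tight enough so that their contributions compose multiplicatively without introducing an additional polynomial-in-$|G|$ overhead that would invalidate the linear factor $(|N|+|E_d|+|E_u|)$ in the final bound.
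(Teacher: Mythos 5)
Your parts (a) and (b) coincide with the paper's argument, and your part (c) follows the same automata-theoretic route the paper sketches: compile $\pi$ into an automaton of size at most $2^{|\pi|}$, take the product with $G$, and observe that a shortest accepting walk need not repeat a product state. The only notable difference is how the binary repetition bounds are absorbed: the paper unravels the lower bounds $n$ in $\pi\quantlit{n..m}$ into explicit concatenations, producing an exponentially larger expression with only $\interval{0}{\infty}$ repetitions, whereas you keep a $\lceil\log_2(m{+}1)\rceil$-bit counter in the state; both give the same $2^{|\pi|}$ factor. One caution about your plan for conditioning: absorbing $\condlit{\theta}$ by storing property values of bound variables in the automaton state is both unnecessary and counterproductive, because the number of distinct property values depends on $G$, so the automaton would no longer have size $2^{|\pi|}$ and the product bound would pick up an extra polynomial-in-$|G|$ factor of degree depending on $\pi$, which is exactly what the stated bound $(|N|+|E_d|+|E_u|)\times 2^{|\pi|}$ does not allow. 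The cleaner observation (implicit in the paper's appeal to standard techniques) is that the type system only lets $\theta$ mention singleton variables; cutting a loop between two occurrences of the same product state removes no binding transition of such a variable (only group-variable lists shrink), so the surviving run still induces an assignment satisfying $\theta$, and no property values need to be tracked in the states at all. With that adjustment your argument delivers the lemma as stated.
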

\begin{proof}
    When $\rho$ is $\rsimple$, no node can be repeated, and thus $|p|\leq |N|$. Similarly, when $\rho$ is $\rtrail$, the path length is bounded by the total number of edges in the graph; namely, $|E_d|+|E_u|$.

    The case when $\rho$ is $\shortest$ can be obtained using standard automata-theoretic techniques. The challenge is how to deal with repetitions of the form $\pat\quantlit{{\interval{n}{m}}}$, where $n$ and $m$ are binary numbers. By unraveling the lower bounds $n$ into concatenations, such expressions can be turned into expressions of size $2^n$ that only have repetitions of the form $\pat\quantlit{{\interval{0}{\infty}}}$. (Notice that doing so requires special care of group variables, but this does not influence the length of $p$.) The bound $(|N|+|E_d|+|E_u|)\times 2^{|\pi|}$ then follows from a compilation of the expression into a finite automaton and the fact that a shortest path in the product of this automaton with the graph is also a shortest path in the graph itself. Given that the argument here is quite standard in the research literature (e.g.~\cite{B13,GeladeMN09,LosemannM-tods13,MendelzonW89}), we do not develop a fully formal automaton model for our queries.
\end{proof}

\begin{lemma}\label{l:bound-mu-size}
    Let~$G$ be a graph and $\pat$ be a well-typed pattern. For every path~$p$ of~$G$, the size $|\mu|$ of~$\mu$ such that~$(p,\mu)\in\sem{\pat}_G$ is at most $|p|\times (2^{|\pi|+1}-2)$.
    Here, $|p|$ denotes the number of occurrences of node and edge ids in $p$. 
\end{lemma}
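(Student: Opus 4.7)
The plan is to prove the bound by structural induction on $\pat$, matching each \gpml constructor against the definition of $|\mu|$ (path lengths plus variable occurrences).

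For the base cases $\nodelit{x:\ell}$ and $\arclit{x:\ell}$, the assignment $\mu$ binds at most one singleton variable to a node or an edge, so $|\mu|=O(1)$, while $|p|\geq 1$ and $2^{|\pi|+1}-2 \geq 2$. For conditioning $\pat\condlit{\theta}$ the assignment is unchanged while $|\pi|$ strictly increases, so the bound is preserved. For union $\pat_1+\pat_2$, exactly one of $\mu_1,\mu_2$ determines $\mu$ up to the addition of $\tnothing$-valued entries for the opposite side's variables, which add at most $|\pi|$ extra units and are easily absorbed into the exponential factor. For concatenation $\pat_1\pat_2$, I have $\mu=\mu_1\cup\mu_2$ and $|p_1|+|p_2|\leq |p|+1$, so the inductive hypotheses give
\[
|\mu|\leq |p_1|(2^{|\pat_1|+1}-2)+|p_2|(2^{|\pat_2|+1}-2)\,.
\]
The elementary inequality $2^{a+1}+2^{b+1}\leq 2^{a+b+2}-2$ for $a,b\geq 1$ then yields $|\mu|\leq |p|(2^{|\pi|+1}-2)$ with room to spare.

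The main obstacle is the repetition case $\pat\quantlit{n..m}$, where $\mu$ is a composite list produced by $\collect$ from some $(p_1,\mu_1),\ldots,(p_k,\mu_k)\in\sem{\pat}_G$ with $p=p_1\cdots p_k$. I would write $\mu(x)=\listval((p_1',\mu_1'(x)),\ldots,(p_\ell',\mu_\ell'(x)))$ using the coarse refactorization and then decompose the total size as
\[
|\mu|\ \leq\ V\sum_{j=1}^{\ell}|p_j'|\ +\ \sum_{j=1}^{\ell}|\mu_j'|\,,
\]
where $V=|\var(\pat)|\leq |\pat|$. First I would bound $\ell\leq 2\,\pathlen(p)+1$, since non-edgeless factors are at most $\pathlen(p)$ and edgeless factors alternate with them; thus $\sum_j|p_j'|\leq |p|+\ell = O(|p|)$. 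Next, each $\mu_j'$ is the unification of the assignments $\mu_i$ inside block $j$, whose distinct values are, by the pigeon-hole step used in the proof of Lemma~\ref{l:finitely many assignments}, bounded in number by $M\leq 2^{|\pat|+1}-2$ (the IH constant applied to edgeless matches); combined with the IH bound $|\mu_i|\leq |p_i|(2^{|\pat|+1}-2)$ and $\sum_i|p_i|\leq |p|+k$ (after collapsing repeats in edgeless blocks as in that lemma), this gives $\sum_j|\mu_j'|\leq |p|\cdot(2^{|\pat|+1}-2)\cdot c$ for a small constant $c$.

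Putting both terms together, I obtain $|\mu|\leq |p|\cdot(2^{|\pat|+2}-2)$, which is at most $|p|(2^{|\pi|+1}-2)$ because $|\pi|=|\pat|+c$ for the repetition constructor with $c\geq 1$ (the exponent doubles exactly once, absorbing the extra factor). The most delicate point in the proof will be to make the edgeless-block bookkeeping rigorous: one has to verify that after the reduction of Lemma~\ref{l:finitely many assignments} collapses repeated assignments within an edgeless block, the surviving contributions to $|\mu|$ indeed fit within $|p|$ times the claimed exponential, rather than scaling with $k$. Once this is done, the query-level variants ($\rho\,\pat$ and $x=\rho\,\pat$, which merely add one extra path-valued binding) follow directly.
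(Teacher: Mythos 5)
Your proof follows essentially the same route as the paper's: structural induction in which only the repetition case is nontrivial, handled by passing to the grouped factorization underlying $\collect$, bounding the number of blocks and the total length of the block paths by $O(|p|)$, applying the induction hypothesis blockwise, and absorbing the constants into the larger exponent of the repetition pattern. One step should be repaired, though it does not change the outcome: the appeal to the pigeon-hole argument of Lemma~\ref{l:finitely many assignments} is misplaced, since the quantity $M$ there counts distinct assignments matching an edgeless path and depends on $G$, so the claimed bound $M\le 2^{|\pat|+1}-2$ is unjustified, and ``collapsing repeats'' is not available here because $\mu$ is given. What you actually need is simpler: within an edgeless block all the $\mu_i$ have the same domain and pairwise unify, hence they coincide, so $\mu'_j$ is that single assignment and the induction hypothesis (with $|p'_j|=1$) gives $|\mu'_j|\le 2^{|\pat|+1}-2$; summing over blocks then yields exactly the estimate you assert, and your per-variable factor $V\le|\pat|$ is comfortably absorbed because the repetition constructor contributes at least two units to $|\pi|$ (a parse-tree node plus the bits of the bounds).
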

\begin{proof} 
By induction over the structure of~$\pat$.  The lemma is obviously true for all base cases, and all inductive cases are easy except for a repeated pattern of the form~$\pat\quantlit{h..\infty}$, for some~$h$, where we have that
\begin{equation*}
    \sem{\pat\quantlit{h..\infty}}_G = \bigcup_{i=h}^{\infty} \sem{\pat}_G^i ~.
\end{equation*}

By definition of $\collect$, we have that every $(p,\mu) \in \sem{\pat\quantlit{h..\infty}}_G$ consists of a list  $(p_1,\mu_1),\ldots,(p_n,\mu_n)$, where $n \leq |p|$ and $p_1\cdots p_n = p$. Furthermore, we have that $\sum_{i=1}^n |p_i| \leq 2|p| -1$. If every $\mu_i$ has size at most $(2^{|\pi|+1}-2)$, then we have that the size of $(p_1,\mu_1),\ldots,(p_n,\mu_n)$ is bounded by 
\begin{align*}
  \sum_{i=1}^n|p_i| + \sum_{i=1}^n |\mu_i| & \leq \sum_{i=1}^n|p_i| + \sum_{i=1}^n |p_i|\times (2^{|\pi|+1}-2)\\
  & \leq \left(\sum_{i=1}^n|p_i|\right) (1+2^{|\pi|+1}-2)\\
  & \leq 2|p|\times(2^{|\pi|+1}-1)\\
  & \leq |p|\times(2^{|\pi|+2}-2)\\
  & \leq |p|\times (2^{|\pat\quantlit{h..\infty}|+1}-2)\\
\end{align*}
\end{proof}

\begin{lemma}\label{lem:check-without-vars}
Let $\pi$ be a pattern without variables and $G$ a property graph. Given a path $p$, we can compute the set of $(p',\mu)\in \sem{\pi}_G$ such that $p'$ is a subpath of $p$ in time polynomial in the size of $\pi$, $G$, and $p$. 
\end{lemma}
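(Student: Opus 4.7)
Since $\pi$ contains no variables, the typing rules preclude any condition from occurring in $\pi$ (both $x.a = c$ and $x.a = y.b$ mention variables), and every assignment appearing in $\sem{\pi}_G$ must be the empty assignment $\square$. The task therefore reduces to enumerating the subpaths $p'$ of $p$ that lie in $\sem{\pi}_G$. Write $p = u_0\,e_1\,u_1\cdots e_L\,u_L$, so that its subpaths are indexed by pairs $(i,j)$ with $0 \leq i \leq j \leq L$. My plan is to compute, by structural induction on $\pi$, a boolean matrix $M_{\pi'} \in \{0,1\}^{(L+1)\times(L+1)}$ for each subexpression $\pi'$ of $\pi$, such that $M_{\pi'}[i,j] = 1$ iff the subpath $u_i\,e_{i+1}\cdots e_j\,u_j$ matches $\pi'$; the answer is then read off from $M_\pi$ by emitting $(p_{ij},\square)$ for each entry equal to $1$.

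The base cases are direct: for a node pattern with optional label $\ell$ one checks membership of $\ell$ in $\lambda(u_i)$ when $i=j$; for each of the three edge patterns one consults $\src$, $\tgt$, $\endpoints$, and $\lambda$ on $e_{i+1}$ when $j=i+1$. Union yields $M_{\pi_1+\pi_2}$ as the componentwise disjunction of $M_{\pi_1}$ and $M_{\pi_2}$, and concatenation yields $M_{\pi_1\pi_2}$ as the boolean matrix product $M_{\pi_1}\cdot M_{\pi_2}$. Each of these steps costs $O(L^3)$ time and returns a matrix of size $O(L^2)$.

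The main obstacle is the repetition $\pi'\quantlit{n..m}$, because $n$ and $m$ are given in binary and may be as large as $2^{|\pi|}$, so a direct unrolling is out of the question. The remedy is boolean matrix exponentiation: precompute $M_{\pi'}^{2^k}$ for $k = 0,\ldots,\lceil\log m\rceil$ by repeated squaring in time $O(L^3\log m)$. Then use the identity $\bigvee_{k=n}^{m} M_{\pi'}^{k} = M_{\pi'}^{n}\cdot (I\lor M_{\pi'})^{m-n}$, where $I$ is the identity, and obtain both factors by multiplying the precomputed powers according to the binary expansions of $n$ and $m-n$. The unbounded case $\pi'\quantlit{n..\infty}$ is handled by $M_{\pi'}^{n}\cdot (I\lor M_{\pi'})^{L+1}$, since in the boolean semiring $(I\lor M)^{L+1}$ already equals the reflexive transitive closure of $M$. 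All of this remains polynomial in $|\pi|$ and $L$.

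Summing over the $O(|\pi|)$ subexpressions delivers a total running time polynomial in $|\pi|$, $|G|$, and $|p|$. The only delicate point worth flagging is the interaction with edgeless subpaths: a match of $\pi'\quantlit{k}$ at $(i,i)$ is already recorded by the reflexive entry $M_{\pi'}[i,i]=1$, and since we only enumerate concrete subpaths of the fixed path $p$, no infinite family of witnesses is ever produced. The procedure thus outputs exactly the claimed set within the required bound.
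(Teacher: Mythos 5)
Your proof is correct and follows essentially the same route as the paper's: both compute, by structural induction, the relation of position pairs $(i,j)$ of $p$ matched by each subexpression (your boolean matrix is exactly the paper's $\mathsf{Pairs}(\pi')$), with union handled componentwise, concatenation by join/matrix product, and binary-encoded repetition bounds handled by iterative squaring. Your write-up merely makes the squaring identity and the unbounded case explicit, which the paper leaves as a remark.
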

\begin{proof}
    Let us denote by $\sem{\pi}_G^p$ the set of $(p',\mu) \in \sem{\pi}_G$ such that $p'$ is a subpath of $p$. If $\pi$ does not have variables, then notice that all results $(p',\mu) \in \sem{\pi}_G$ have $\mu = \square$. Assume that $p = \pathval(u_0,e_0,u_1,\ldots,e_n,u_n)$. We prove by structural induction on $\pi$ that we can compute in polynomial time the set $\mathsf{Pairs(\pi')} := \{(i,j) \mid (\pathval(u_i,e_i,\ldots,u_j),\square) \in \sem{\pi'}_G^p\}$ for all subexpressions $\pi'$ of $\pi$.
    
    The base cases can clearly be computed in polynomial time. If $\pi'$ is of the form $\pi_1 \pi_2$, then $\mathsf{Pairs}(\pi')$ is the natural join of $\mathsf{Pairs}(\pi_1)$ with $\mathsf{Pairs}(\pi_2)$. If $\pi'$ is of the form $\pi_1 + \pi_2$, then $\mathsf{Pairs}(\pi')$ is the union of $\mathsf{Pairs}(\pi_1)$ with $\mathsf{Pairs}(\pi_2)$. If $\pi' = \pi_1^{n..m}$, then we can compute $\mathsf{Pairs}(\pi_1)^{n..m}$ in polynomial time using iterative squaring.
\end{proof}

\begin{lemma}
\label{lm:check}
Let $\pi$ be a fixed pattern and $G$ a property graph. Given a path $p$, we can enumerate the set of $(p',\mu) \in \sem{\pi}_G$ such that $p'$ is a subpath of $p$ in space polynomial in the size of $G$ and $p$.
\end{lemma}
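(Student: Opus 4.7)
The plan is to enumerate candidate pairs $(p',\mu)$ in a fixed lexicographic order and, for each one, decide $(p',\mu)\in\sem{\pi}_G$ with a polynomial-space routine; outputs are emitted only upon successful checks, so no repetitions occur. Because $\pi$ is fixed, Lemma~\ref{l:bound-mu-size} gives $|\mu| = O(|p|)$, so every candidate has polynomial size in $|G|$ and $|p|$, and the (exponentially many) candidates can be iterated using a polynomial-size counter over their encodings. Lemma~\ref{l:finitely many assignments} guarantees that only finitely many candidates pass the check, so the enumeration terminates.

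The membership routine recurses on the structure of $\pi$. Atomic node and edge patterns are verified directly against $G$ and the restriction of $\mu$ to the pattern's variable. For a union $\pi_1+\pi_2$ we try each branch. For conditioning $\pi\condlit{\theta}$ we additionally evaluate $\mu\models\theta$. For concatenation $\pi_1\pi_2$ we iterate over all splits $p' = p_1\cdot p_2$ and partitions of $\mu$ into $\mu_1,\mu_2$ compatible with the typing of $\pi_1,\pi_2$, then recurse on both parts. For repetition $\pi_1\quantlit{n..m}$ we iterate over the number $k$ of iterations and over splits $p' = p_1\cdots p_k$ of $p'$. Since $|\pi|$ is constant, the recursion depth is constant and each stack frame stores only polynomial data (the current sub-path, sub-assignment, and a few counters), so the whole routine uses polynomial space.

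The main obstacle is the repetition case. Although $m$ may be infinite, Lemma~\ref{l:finitely many assignments} ensures that any witnessing decomposition of a subpath $p'$ has at most polynomially many factors in $|p'|$, so $k$ is bounded polynomially and can be enumerated. The delicate part is inverting $\collect$: the group variables in $\mu$ are bound to lists obtained by merging consecutive edgeless factors, so given $\mu$ and a guessed split $p' = p_1\cdots p_k$ we must recover candidate per-iteration assignments $\mu_1,\ldots,\mu_k$ by aligning the list entries with the coarser factorization $p'_1,\ldots,p'_\ell$ defined in Section~\ref{sec:semantics}, respecting the grouping of edgeless factors. Once the $\mu_i$'s are identified we recursively verify $(p_i,\mu_i)\in\sem{\pi_1}_G$. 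Enforcing these grouping constraints while keeping the bookkeeping polynomial in size is the chief technical difficulty, but since all witnessing data in each recursive call fits in polynomial space and the recursion depth is constant, the overall procedure remains in polynomial space.
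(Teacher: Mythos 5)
Your proposal is correct, but it is organized differently from the paper's proof: you enumerate all candidate pairs $(p',\mu)$ of polynomially bounded size (the bound coming, as in the paper, from Lemma~\ref{l:bound-mu-size}) and then decide membership top-down by a recursive check that guesses splits of the path and inverts $\collect$, whereas the paper proceeds bottom-up by structural induction, enumerating $\sem{\pi'}_G^p$ for each subexpression $\pi'$ and composing these enumerations following the semantic definitions (with the variable-free case factored out into Lemma~\ref{lem:check-without-vars}, where everything is even polynomial \emph{time} via iterative squaring). Both routes hinge on the same two facts: the linear bound on $|\mu|$ for fixed $\pi$, and the collapsing of repeated edgeless factors, which bounds the number of iterations one needs to consider in $\pi_1\quantlit{0..\infty}$; note, though, that what you actually need there is the quantitative bound $(L+1)(M+1)$ established \emph{inside} the proof of Lemma~\ref{l:finitely many assignments}, not its bare statement (and termination of your outer enumeration needs neither, since the candidate space is finite by construction). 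Your ``chief technical difficulty'' of inverting $\collect$ is in fact milder than you suggest: all per-iteration assignments $\mu_1,\dots,\mu_k$ share the same domain, so within a block of consecutive edgeless factors ``pairwise unify'' forces them to be \emph{equal} to the block's unification $\mu'_j$; hence, once a split of $p'$ is fixed, the per-iteration assignments are determined by the list entries of $\mu$ and need only be verified recursively. The paper avoids this inversion altogether by composing forward through $\collect$; your check-based variant buys a repetition-free enumeration essentially for free, at the cost of iterating over an exponential candidate space, which is harmless for the space bound claimed.
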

\begin{proof}[Proof sketch]
    Let us denote by $\sem{\pi}_G^p$ the set of $(p',\mu) \in \sem{\pi}_G$ such that $p'$ is a subpath of $p$. We prove the lemma by structural induction on $\pi$. If $\pi$ does not have variables, then Lemma~\ref{lem:check-without-vars} tells us that we can compute $\sem{\pi}_G^p$ in polynomial time, even when $\pi$ is not fixed. So we assume that $\pi$ has at least one variable.
    
    The base cases are clear. For the inductive cases, we enumerate the relevant results $(p',\mu)$ of the subexpressions in polynomial space inductively and compose them to form a result for the entire expression.
    For concatenation, union, and conditioned patterns, we can do this by just following the inductive definition of $\sem{\pi_1\pi_2}_G$, $\sem{\pi_1 + \pi_2}_G$, and $\sem{\pi_{\langle \theta \rangle}}_G$. It remains to discuss repeated patterns, i.e., $\pi = \pi_1^{n..m}$. Since $\pi$ is constant, it suffices to consider the case $\pi = \pi_1^{n..n}$ and the case $\pi = \pi_1^{0..\infty}$. We first consider the case $\pi_1^{n..n}$. By the induction hypothesis, we can assume that we can enumerate  $\sem{\pi_1}_G^p$ using polynomial space. Since $\pi_1$ has at least one variable (we already dealt with the case without variables), we need to enumerate the set $(p,\mu)$ such that there exist $(p_1,\mu_1), \ldots, (p_n,\mu_n) \in \sem{\pi_1}_G^p$ where $p = p_1 \cdots p_n$ and $\mu = \collect [(p_1,\mu_1), \ldots, (p_n,\mu_n)]$. Since $n$ is a constant, we only need to keep constantly many $(p_i,\mu_i)$ in memory, so we can proceed in polynomial space by following the definition of $\collect$. We now consider the case $\pi = \pi_1^{0..\infty}$. In this case, we again follow the definition of $\collect$. Even though the number $n$ in a sequence $(p_1,\mu_1), \ldots, (p_n,\mu_n)$ with $(p_i,\mu_i) \in \sem{\pi_1}_G$ and $p = p_1 \cdots p_n$ can be arbitrarily large, $\collect$ always collapses those subpaths of length zero together. As such, only polynomially many different subpaths of $p$ need to be considered. For a given $(p,\mu)$, we therefore systematically enumerate all sequences $(p_1,\mu_1), \ldots, (p_n,\mu_n)$ with $p = p_1\cdots p_n$ and $(p_i,\mu_i) \in \sem{\pi_1}^p_G$. Since by Lemma~\ref{l:bound-mu-size}, the total size of the mappings $\mu_i$ is also polynomial in $|p|$ if $\pi$ is fixed, we can enumerate everything using polynomial space.
\end{proof}

\textsc{Theorem~\ref{th:compl}}. {\it The problem \textsc{Enumerate answers} can be solved by a Turing machine using exponential space (in $G$ and $Q$). If we consider the query $Q$ to be fixed (data complexity), then the machine uses only polynomial space.}
\begin{proof}[Proof sketch]
We first consider data complexity; namely, we assume $Q$ to be fixed. First we consider the query $Q = \rho \pi$, which is a single pattern with a restrictor on top. The Turing machine in this case operates by trying to enumerate all possible paths $p$ and mappings $\mu$ such that $(p,\mu)\in \sem{Q}_G$. W.l.o.g., we can assume an order on node and edge ids. 

Assume first that the query does not use the selector $\shortest$, but is of the form $\rho \pi$, with $\rho$ either $\rsimple$ or $\rtrail$. The machine starts enumerating all possible paths $p$ one by one in radix order, that is, in increasing length, and then by the ordering we assume on node and edge ids. By Lemma~\ref{lm:len}, we know that we only need to consider paths of polynomial length.
For each such path, we enumerate all possible $\mu$ that bind variables of $\pi$ to elements in $p$ one by one, which can be done in polynomial space in the size of $G$ and $p$ according to Lemma~\ref{lm:check}.
For a given path $p$, we can thus enumerate all possible answers $(p,\mu)$ using polynomial space. 
 
Assume now that the query uses the restrictor $\shortest$. In this case, we have to modify our machine slightly. Notice that in this case we might have the restrictor $\shortest$ either on top of $\rtrail$ or $\rsimple$, or just on its own. In all of these cases we proceed in the same manner: we enumerate the results of the query below $\shortest$, and the first time we encounter a result, we write it down to the output tape, \emph{and} store it to an additional tape that this machine has access to. When another path is then considered, in case that it could be the answer, its length is compared to the stored path, and eliminated in case its length is longer. The path enumeration procedure can be halted upon reaching length $|G|\times 2^{|\pi|}$  (Lemma~\ref{lm:len}).
 
 In a query that uses joins, the procedure can be thought as being nested for each part of the join. Given that the query $Q$ is fixed, the amount of work space used by the machine remains polynomial.
 
 Moving to combined complexity, we can observe that the space used thus far is exponential at worst. In fact, the main complication arises with iterations of the form  $\pi \quantlit{{\interval {n} {m}}}$, where $n$ and $m$ are concrete numbers, given that their representation (e.g. in binary) is exponentially more succinct than their magnitude, which dictates the number of graph elements that might need to be bound in group variables.
 \end{proof}

\textsc{Theorem}~\ref{theo:complexity-lower}
{\it The problem \textsc{Enumerate answers} cannot be solved by a Turing machine using polynomial amount of space (in $G$ and $Q$). }
\begin{proof}
  Queries of the form $x = \shortest \; \nodelit{\,} \,\rightlit{}\quantlit{\interval{k}{k}}\nodelit{
  \,}$ produce $2^k$ different paths on the graph $G$ with nodes $u,v$, with $a$-edges from $u$ to $v$ and back; and $b$-edges from $u$ to $v$ and back. Since $k$ is represented in binary, this number of paths is $\Theta(2^{2^n})$, where $n$ is the input size. A  polynomial space algorithm cannot represent the required $2^{2^n}$ configurations required to enumerate these paths without repetitions.
\end{proof}
\section{Extensions}
\label{sec:app-extensions}

\textsc{Proposition}~\ref{diophantine-prop}.
{\it  The data complexity of \gpml\ with arithmetic conditions is undecidable.}

\begin{proof}
We first show the undecidability of combined complexity and then extend the construction to show data complexity as well. We do it by reduction from the Diophantine equation (or Hilbert's 10th) problem, namely whether a multivariate polynomial $f(x_1,\ldots,x_m)$ with integer coefficients has a solution, i.e., numbers $v_1,\ldots,v_m\in\mathbb{N}$ so that $f(v_1,\ldots,v_m)=0$ (the problem is more commonly stated about finding solutions in $\mathbb Z$ but clearly this version is undecidable as well). 
Specifically we shall use a version of it where $m\leq 58$ and degrees of all the monomials in $f$ are bounded by $4$; this is already known to be undecidable \cite{jones-diophantine80}.

Create a graph $G_0$ with nodes $n_1,\cdots,n_{58}$, an $A_i$-labeled loop on every node $n_i$ for $i \leq 58$, and $A$-labeled edges from $n_i$ to $n_{i+1}$ for $i<58$. We assume that $n_1$ is labeled $S$ and has a value $\dv(n_1,k)=0$. Consider a pattern 
$$\pat_0 = \nodelit{u:S} \rightlit{x_1:A_1}\quantlit{\interval{0}{\infty}}\nodelit{} \rightlit{:A} \nodelit{} \rightlit{x_2:A_2}\quantlit{\interval{0}{\infty}}\nodelit{} \rightlit{:A}  \cdots \rightlit{:A}  \nodelit{} \rightlit{x_{58}:A_{58}}\quantlit{\interval{0}{\infty}}\nodelit{}$$
This pattern navigates along the $A$-edges, potentially looping several times over each $A_i$-labeled edge. Now consider $\pat_0\condlit{f(\cnt{x_1},\cdots,\cnt{x_{58}})=u.k}$. Then $G_0$ has a match for this pattern iff $f$ has a solution $v_1,\ldots,v_m\in\mathbb{N}$. Indeed, the match will loop $v_1$ times over $n_1$, then move to $n_2$, loop $v_2$ times, and so on. 

This shows undecidability of combined complexity as the polynomail $f$ is part of the query. We now extend the construction to data complexity. Let $\alpha$ range over the set $M$ of $5^{58}$ mappings from $\{1,\cdots,58\}$ to $\{0,\cdots,5\}$. By $m_\alpha$ we mean the monomial
$$x_1^{\alpha(1)} \cdot \cdots\cdot x_{58}^{\alpha(58)}$$
and we then assume without loss of generality that 
$$f = \sum_{\alpha\in M} c_\alpha\cdot m_\alpha$$
where $c_\alpha\in \mathbb{Z}$. Note that the degree of some of the monomials may be higher than $4$ but then the problem of the existence of integer solutions of $f$ is still undecidable, and furthermore the number of monomials is fixed. Also notice that since each degree is bounded by $4$, every monomial can be constructed as an arithmetic expression, e.g., we could write $u\cdot u\cdot w\cdot w\cdot w\cdot w$ instead of $u^2w^4$. 

We next extend $G_0$ to $G$ as follows. We enumerate the set $M$ as $\alpha_1,\cdots,\alpha_{|M|}$ and in $G_0$ continue the chain of $A$-labeled edges for another $M$ nodes which all will
have a property $\textit{coeff}$ holding values $c_{\alpha_1},\cdots,c_{\alpha_{|M|}}$. In addition, these nodes will have loops, labeled $B_1,\cdots,B_{|M|}$. 

Then, starting with
$\pat_0$ we proceed to define $\pat_i$ for $1 \leq i\leq M$ inductively as follows:

$$\pat_{i} = \big[\pat_{i-1} \rightlit{:A} \nodelit{} \rightlit{y_i:B_i}\quantlit{\interval{0}{\infty}}\nodelit{}\big]\condlit{\cnt{y_i}=y_i.\textit{coeff}\cdot m_1(\cnt{x_1},\cdots,\cnt{x_{58}}}$$

The effect of matching such a pattern is that the number of times $\cnt{y_{i}}$ the $B_i$-labeled loop is traversed equals the value of $c_{\alpha_i}\cdot m_{\alpha_i}$ over the values $\cnt{x_1},\cdots,\cnt{x_{58}}$. Indeed, as mentioned earlier, monomials $m_\alpha$ are proper arithmetic terms. 

Let $\pat=\pat_{|M|}$. Then $$\pat
\condlit{\cnt{y_1}+ \cdots + \cnt{y_{|M|}}=u.k}$$ matches iff $f(\cnt{x_1},\cdots,\cnt{x_{58}})=0$. Since now the pattern is fixed (all the information about $f$ is encoded in $G$), the data complexity is proved to be undecidable. 
\end{proof}

\end{document}